\numberwithin{equation}{section}
\numberwithin{figure}{section}
\theoremstyle{plain}
\newtheorem{thm}{\protect\theoremname}
  \theoremstyle{definition}
  \newtheorem{defn}[thm]{\protect\definitionname}
  \theoremstyle{remark}
  \newtheorem{rem}[thm]{\protect\remarkname}
  \theoremstyle{plain}
  \newtheorem{prop}[thm]{\protect\propositionname}
  \providecommand{\definitionname}{Definition}
  \providecommand{\propositionname}{Proposition}
  \providecommand{\remarkname}{Remark}
\providecommand{\theoremname}{Theorem}
\begin{document}

\global\long\def\body{\mathcal{B}}
\global\long\def\part{\mathcal{P}}
\global\long\def\surface{\mathcal{S}}

\global\long\def\stress{X}
\global\long\def\strain{\chi}
\global\long\def\strech{\varepsilon}

\global\long\def\cbnd{d}
\global\long\def\cochain{X}

\global\long\def\force{g}
\global\long\def\bforce{f}
\global\long\def\tforce{t}
\global\long\def\cflux{\Psi}

\global\long\def\bsys{\Psi}
\global\long\def\ssys{\Phi}
\global\long\def\fsys{\hat{\Psi}}

\global\long\def\Lip{\mathfrak{L}}
\global\long\def\SS#1{\mathcal{L}_{\mathfrak{s}}\left(#1\right)}
\global\long\def\Limb{\Lip_{\mathrm{Em}}}
\global\long\def\Lmap{\mathcal{F}}

\global\long\def\Smap{\phi}
\global\long\def\Nlin#1{\|#1\|_{\Lip}}
\global\long\def\gnorm{\odot}

\global\long\def\dist#1{\mathrm{dist}_{#1}}

\global\long\def\con{V}

\global\long\def\reals{\mathbb{R}}
\global\long\def\too{\longrightarrow}
\global\long\def\Vs{V}
\global\long\def\we{\wedge}

\global\long\def\Vr{\Vs_{r}}
\global\long\def\Vrd{\Vs^{r}}

\global\long\def\spt{\mathrm{spt}}

\global\long\def\imag#1{\mathrm{image}\left(#1\right)}

\global\long\def\emb{\varphi}

\global\long\def\emr{g}

\global\long\def\refi{\mathfrak{S}}

\global\long\def\eqr{\overset{\body}{\backsim}}
 \global\long\def\eqc#1{\left[#1\right]}
\global\long\def\qLip{\Lip(\oset,\pspace)_{\body}}
\global\long\def\qnorm#1{\left\Vert #1\right\Vert _{\qLip}}
\global\long\def\qpro{\pi_{\body}}
\global\long\def\ic{i}

\global\long\def\rconf{\varphi}
\global\long\def\rLip{f}
\global\long\def\rvirv{v}
\global\long\def\rmvirv{u}

\global\long\def\eqrconf{\overset{\conf\left(\body\right)}{\backsim}}

\global\long\def\inter#1{\mathrm{int}\left(#1\right)}

\global\long\def\sform#1#2{D^{#1}\left(#2\right)}

\global\long\def\currents#1#2{D_{#1}\left(#2\right)}

\global\long\def\Fmass#1{M\left(#1\right)}

\global\long\def\Fnormal#1{N\left(#1\right)}

\global\long\def\Fcomass#1{\left\Vert #1\right\Vert _{0}}

\global\long\def\Nnorm#1{\left\Vert #1\right\Vert ^{N}}

\global\long\def\ball#1#2{B\left(#1,#2\right)}

\global\long\def\mtb{\Gamma}

\global\long\def\dns#1#2{d\left(#1,#2\right)}

\newcommand{\note}[1]{\textcolor{blue}{**[#1]**}}

\global\long\def\compact{K}

\global\long\def\norm#1#2{\left\Vert #1\right\Vert _{#2}}

\global\long\def\colb#1#2#3{B^{\Lip}(#1,#2,#3)}

\global\long\def\cell{\sigma}
\global\long\def\simp{\sigma}
\global\long\def\oset{U}
\global\long\def\pspace{S}

\global\long\def\chain{\mathcal{A}}
\global\long\def\bnd{\partial}

\global\long\def\mass#1{|#1|}
\global\long\def\fmass#1{|#1|^{\flat}}
\global\long\def\fmassr#1{|#1|_{\oset}^{\flat}}
\global\long\def\smass#1{|#1|^{\sharp}}

\global\long\def\rest{\raisebox{0.4pt}{\,\mbox{\ensuremath{\llcorner}}\,}}
\global\long\def\irest{\raisebox{0.4pt}{\mbox{\,\ensuremath{\lrcorner}\,}}}

\global\long\def\ess{\mathrm{ess}}

\global\long\def\virv{v}
\global\long\def\mvirv{\xi}

\global\long\def\conf{\kappa}
\global\long\def\confs{\mathcal{Q}}
\global\long\def\gconf{\kappa}

\global\long\def\virvs{W_{\conf}}
\global\long\def\mvirvs{W}

\global\long\def\vMmass#1{\left\Vert #1\right\Vert }

\global\long\def\gpart{\mathring{\part}}
\global\long\def\gsurface{\mathring{\surface}}

\global\long\def\halfs{H}

\global\long\def\gunivb{\mathring{\Omega}_{\body}}
\global\long\def\gunivs{\bnd\mathring{\Omega}_{\body}}

\global\long\def\D{\mathfrak{D}}

\global\long\def\form{\phi}

\global\long\def\Fmass#1{M\left(#1\right)}

\global\long\def\measure#1{\mu_{#1}}

\global\long\def\Fnormal#1{N\left(#1\right)}

\global\long\def\Fflat#1{F\left(#1\right)}

\global\long\def\lusb{L}

\global\long\def\wcbd{\widetilde{\cbnd}}

\global\long\def\fform#1{D_{#1}}

\title[Rough Bodies]{The configuration space and principle of virtual power for rough
bodies}

\author{Lior Falach and Reuven Segev}
\begin{abstract}
In the setting of an $n$-dimensional Euclidean space, the duality
between velocity fields on the class of admissible bodies and Cauchy
fluxes is studied using tools from geometric measure theory. A generalized
Cauchy flux theory is obtained for sets whose measure theoretic boundaries
may be as irregular as flat $(n-1)$-chains.  Initially, bodies are
modeled as normal $n$-currents induced by sets of finite perimeter.
A configuration space comprising Lipschitz embeddings induces virtual
velocities given by locally Lipschitz mappings. A Cauchy flux is
defined as a real valued function on the Cartesian product of $(n-1)$-currents
and locally Lipschitz mappings. A version of Cauchy's postulates
implies that a Cauchy flux may be uniquely extended to an $n$-tuple
of flat $(n-1)$-cochains. Thus, the class of admissible bodies is
extended to include flat $n$-chains and a generalized form of the
principle of virtual power is presented. Wolfe's representation theorem
for flat cochains enables the identification of stress as an $n$-tuple
of flat $(n-1)$-forms.
\end{abstract}
\maketitle

\section{Introduction\label{sec:Introduction}}

This work presents a framework for the formulation of some fundamental
notions of continuum mechanics. Specifically, using elements from
geometric measure and integration theory, we consider, within the
geometric setting of $\reals^{n}$, the class of admissible bodies,
configurations of bodies in space, the configuration space, virtual
velocities, and stress theory.

Cauchy's stress theorem is one of the central results in continuum
mechanics. It asserts the existence of the stress tensor which determines
the traction fields on the boundaries of the various bodies. As the
traditional proof relies on locality and regularity assumptions, from
both the validity and the applicability aspects, stress theory is
closely associated with the proper choice of the class of bodies.
Furthermore, an appropriate class of bodies should allow the formulation
of the Gauss-Green theorem or a generalization thereof. 

In light of these observations, formulations of the fundamentals of
continuum mechanics have considered, since the middle of the 20th
century, the appropriate choice of the class of bodies. In \cite{Noll1959},
Noll defined a body as a differentiable three-dimensional compact
manifold with piecewise smooth boundary that can be covered by a single
chart and is endowed with a measure space structure. The configurations
of the body in space provide charts on the body manifold and a part
of the body is defined as a compact subset of the body with piecewise
smooth boundary. In \cite[p.~466]{Truesdell1960} Truesdell and Toupin
ignore the formal issue of admissible bodies and tacitly assume smoothness
wherever necessary. Later on, in \cite[p.~4]{Truesdell1966}, Truesdell
adopted the structure suggested in Noll's \cite{Noll1959}. The common
ground for these early works is in the assumption that bodies in continuum
mechanics should have a smooth structure so that the classical versions
of the notions of mathematical analysis apply. In \cite{Noll1959},
it is shown that Cauchy's original postulate on the dependence of
the traction on the exterior normal may be replaced by an additivity
assumption on the system of forces and the principle of linear momentum. 

Gurtin and Martins introduce in \cite{Gurtin1975} the notion of a
Cauchy flux in order to represent the collection of total forces 
applied to the collection of surface elements. A Cauchy flux is defined
as an additive, area bounded set function acting on the collection
of compatible surface elements of the body, and a weakly balanced
Cauchy flux is defined as a volume bounded Cauchy flux. 

It seems that \cite{Banfi1979} and \cite{Ziemer1983} were the first
to propose that the class admissible bodies in continuum physics should
consist of sets of finite perimeter. In Ziemer's work, admissible
bodies are defined as sets of finite perimeter and a weakly balanced
Cauchy flux is shown to be represented by a measurable vector field.
The works \cite{M.E.Gurtin1986,Noll1988}, which followed, further
extended these studies. In \cite{M.E.Gurtin1986}, the class of admissible
bodies is defined as the class of normalized sets of finite perimeter
while in \cite{Noll1988}, admissible bodies are defined as fit regions
which are bounded regularly open sets of finite perimeter and of negligible
boundary. These postulates enabled the authors to apply a version
of the Green-Gauss theorem and consider sets that do not necessarily
have smooth boundaries as bodies in continuum mechanics for which
balance laws may be written. 

In \cite{Silhavy1985,Silhavy1991}, Silhavy considered bodies as sets
of finite perimeter in a bounded open region of $\reals^{n}$. The
author employs a weak approach in the formulation of Cauchy's flux
theorem. A weakly balanced Cauchy flux of class $L^{1}$ is shown
to be represented by a Borel measurable vector field $q$ of class
$L^{1}$ with a divergence (in the sense of distributions) of class
$L^{1}$. Silhavy's approach gives rise to a Borel set $N_{0}$ of
Lebesgue measure zero such that the flux vector field $q$ represents
the action of the Cauchy flux for any surface whose intersections
with $N_{0}$ has Hausdorff area measure zero. The analysis presented
in Silhavy's work allows for singularities in the flux vector field
and provides for the first time the concept of almost every surface.
In \cite{Silhavy1991}, formal definitions of the concepts of almost
every body and almost every surface are given and a weakly balanced
Cauchy flux of class $L^{p}$ is represented by a measurable vector
field of class $L^{p}$ with a divergence of class $L^{p}$. 

The notions of almost every body and almost every surface are examined
in \cite{Degiovanni1999} and it shown that the Cauchy flux is determined
by its action on a collection of rectangular planar surfaces with
edges parallel to the axes of $\reals^{n}$. A similar extension for
Cauchy interaction is presented in \cite{Marzocchi2003}.

In \cite{Segev1986}, a weak formulation of $p$-grade continuum mechanics,
for any integer $p\ge1$, is presented in the setting of differential
manifolds. Configurations are viewed as $C^{p}$-embeddings of the
body manifold in the physical space and forces are viewed as elements
of the cotangent bundle to the infinite dimensional configuration
manifold of mappings. Forces are shown to be represented by measures
on the $p$-th jet bundle. Such a measure serves as a generalization
of the $p$-th order stress. The representation of forces by stress
measures enables a natural restriction of forces to subbodies. The
consistency conditions for a such a system of $p$-th order forces
are examined in \cite{Segev1991}.

The term \textit{fractal} was coined in 1975 by Mandelbrot to indicate
a highly irregular geometric object (see \cite{Mandelbrot1983}).
Mandelbrot's seminal work was the beginning of a very large body of
research concerning the fractal properties of various physical phenomena.
A variety of approaches have been suggested for the adaptation of
fractal objects to branches of mechanics, \textit{e.g.}, \cite{Tarsov2005,Tarsov2005a,Tarsov2005b,Tarsov2005c,Wnuk2008,Wnuk2009,Epstein2006,Ostoja-Starzewski2009}.

In \cite{Rodnay2002,Rodnay2003}, Cauchy's flux theory is formulated
using Whitney's geometric integration theory \cite{Whitney1957} and
new developments by Harrison \cite{Harrison93,Harrison98a,Harrison98b,Harrison99}.
Bodies are viewed as $r$-dimensional domains of integration in an
$n$-dimensional Euclidean space with $r\leq n$. A body is identified
as an $r$-chain, the limit of a sequence of polyhedral chains with
respect to a norm which is induced by Cauchy's postulates. Three types
of chains are examined: flat, sharp and natural chains, such that
\[
\text{polyhedral}\subset\text{flat chains}\subset\text{sharp chains}\subset\text{natural chains.}
\]
Flat $(n-1)$-chains may represent the fractal boundaries of bodies
and sharp chains are shown to represent even less regular $(n-1)$-dimensional
objects. Fluxes of a given extensive property are postulated to be
$(n-1)$-cochains, \textit{i.e.,} elements of the dual to the Banach
space of $(n-1)$-chains. By the duality structure of Whitney's theory,
as one allows for less regular domains of integration (chains), the
resulting fluxes (cochains) become more regular, automatically. 

\textit{Rough bodies,} introduced by Silhavy \cite{Silhavy2006},
are sets whose measure theoretic boundaries are fractals in the sense
that the outer normal is not defined almost everywhere with respect
to the $(n-1)$-Hausdorff measure.

The present work, describes a framework where the mechanics of bodies
with fractal boundaries may be studied. Unlike \cite{Rodnay2003},
the point of view of geometric measure theory as in \cite{Federer1969}
is mainly adopted. 

The universal body is modeled as an open subset of $\reals^{n}$ and
bodies are modeled as flat chains. In addition to the properties of
the class of admissible bodies, special attention is given to the
study of the kinematics of such bodies in space. The appropriate class
of admissible configurations appears to be the set of Lipschitz embeddings.
This class enjoys two significant properties. Firstly, the set of
Lipschitz embeddings of the universal body into space is an open subset
of the locally convex topological vector space of all Lipschitz mappings
of the universal body into space equipped with the Whitney, or strong,
topology. In addition, for Lipschitz mappings there is a well defined
pushforward action on flat chains, such as those representing bodies.
Therefore, the images of bodies under the pushforward action induced
by a Lipschitz embedding preserve their structure and relevant properties
(\emph{e.g.}, the availability of a generalized Stokes theorem).

Adopting the point of view that virtual velocities are elements of
the tangent bundle of the configuration manifold, as the configuration
space is open in the space of Lipschitz mappings, virtual velocities
may be identified with Lipschitz mappings of the universal body into
space. Considering force and stress theory, it is noted that forces
which are required only to be continuous linear functionals  relative
to the Lipschitz topology, as would be the analogue of \cite{Segev1986},
seem to be too irregular for the setting adopted here. In order to
constitute a consistent force system which is represented by an integrable
stress fields, balance and weak balance are postulated. It is shown
further that balance and weak balance are equivalent together to continuity
relative to the flat norm of chains.

The paper is constructed as follows. Sections \ref{sec:Federer}--\ref{sec:Sharp_functions}
contain a short outline of the various notions of geometric measure
theory which are used in this work. Section \ref{sec:Federer} reviews
the notion of differential forms, currents, flat chains and cochains.
Section \ref{sec:Sets_of_finite_perimeter} presents sets of finite
perimeter as well as the corresponding definitions for bodies and
material surfaces as currents. In Section \ref{sec:On-Lipschitz-mappings}
we discuss some of the properties of locally Lipschitz maps. In particular,
the image of a flat chain under a Lipschitz mapping is examined. In
addition, Lipschitz embeddings and the properties of the set they
constitute are considered. This enables the presentation of a Lipschitz
type configuration space in Section \ref{sec:configuration-space-and}.
In Section \ref{sec:Sharp_functions} we discuss the product of locally
Lipschitz maps and flat chains. This multiplication operation is used
in the definition of a local virtual velocity. Our main theorem is
presented in Section \ref{sec:Cauchy-fluxes} where we prove that
a system of forces obeying balance and weak balance is equivalent
to a unique $n$-tuple of flat $(n-1)$-cochains. Generalized bodies
and surfaces are introduced in Sections \ref{sec:Generalized-bodies}.
Virtual strains, or velocity gradients, stresses and a generalized
form of the principle of virtual work are presented in Sections \ref{sec:virtual-strains-and_virtual_work}
and \ref{sec:Stress}.

\section{Review of elements of homological integration theory\label{sec:Federer}}

In this section, some of the fundamental concepts form the theory
of currents in $\reals^{n}$ are presented. Throughout, the notation
is mainly adopted from \cite[Chapter 4]{Federer1969}. The notion
of flat forms needed for Wolfe's representation theorem, originally
presented in Whitney's Geometric Integration Theory \cite[Chapter VII]{Whitney1957},
is formulated in this section by the tools of Federer's Geometric
Measure Theory.

Let $\oset$ be an open set in $\reals^{n}$ and $V$ a vector space.
The notation $\D^{m}\left(\oset,V\right)$ is used for the vector
space of smooth, compactly supported $V$-valued differential $m$-forms
defined on $\oset$ and $\D^{m}\left(\oset\right)$ is used as an
abbreviation for $\D^{m}\left(\oset,\reals\right)$. The notation
$\cbnd\form$ is use for the \textit{exterior derivative }\textit{\emph{of}}
$\form\in\D^{m}(\oset)$, an element of $\D^{m+1}(\oset)$. The vector
space $\D^{m}(\oset)$ will be endowed with a locally convex topology
induced by a family of semi-norms \cite[p.~344]{Federer1969} as in
the theory of distributions.

A continuous linear functional $T:\D^{m}(U)\to\reals$ is referred
to as an $m$-\textit{dimensional current} in $\oset$. The collection
of all $m$-dimensional currents defined on $\oset$ forms the vector
space $\D_{m}(U)$ which is the vector space dual to $\D^{m}(\oset)$.
Let $T\in\D_{m}(\oset)$ with $m\geq1$ then $\bnd T$, the\textit{
boundary} of $T$ is the element of $\D_{m-1}(U)$ defined by 
\begin{equation}
\bnd T(\form)=T(\cbnd\form),\quad\text{for all}\quad\form\in\D^{m-1}(\oset).
\end{equation}
The exterior derivative $\cbnd$ is a continuous linear map $\cbnd:\D^{m}(\oset)\to\D^{m+1}(\oset)$.
Thus, the boundary operation $\bnd:\D_{m+1}(\oset)\to\D_{m}(\oset)$,
viewed as the adjoint operator to $\cbnd$, is a continuous linear
operator on currents.

As an example of a $0$-current in $\oset$, let $\lusb^{n}$, denote
the $n$-dimensional Lebesgue measure in $\reals^{n}$. Then, the
restricted measure $\lusb^{n}\rest\oset$ is the $0$-current defined
as 
\begin{equation}
\lusb^{n}\rest\oset(\form)=\int_{\oset}\form d\lusb^{n},\quad\text{for all}\quad\form\in\D^{0}(\oset).\label{eq:L^n_0_current}
\end{equation}
 Given $\eta$, a Lebesgue integrable $m$-vector field defined on
$\oset$, then, $\lusb^{n}\wedge\eta$ denotes the $m$-current in
$\oset$ defined by 
\begin{equation}
\lusb^{n}\wedge\eta(\form)=\int_{\oset}\form(\eta)d\lusb^{n},\quad\text{for all}\quad\form\in\D^{m}(\oset).\label{eq:L^n_m_current}
\end{equation}

The inner product in $\reals^{n}$ induces an inner product in $\bigwedge_{m}\reals^{n}$
and  $\mass{\xi}$ will denote the resulting norm of an $m$-vector
$\xi$. Given $\form\in\D^{m}(\oset)$, for every $x\in\oset$, $\form(x)$
is an $m$-covector, and we set 
\begin{equation}
\|\form(x)\|=\sup\left\{ \form(x)(\xi)\mid\mass{\xi}\leq1,\,\,\xi\text{ is a simple }m\text{-vector}\right\} .
\end{equation}
 The \textit{comass }\textit{\emph{of}}\textit{ $\form$} is defined
by 
\begin{equation}
\Fmass{\form}=\sup_{x\in\oset}\|\form(x)\|.\label{eq:Mass_form}
\end{equation}
For $T\in\D_{m}(\oset)$ the \textit{mass }\textit{\emph{of}}\textit{
$T$} is dually defined by 
\begin{equation}
\Fmass T=\sup\left\{ T\left(\form\right)\mid\phi\in\D^{m}\left(\oset\right),\,\,\Fmass{\form}\leq1\right\} .\label{eq:Mass_current}
\end{equation}

An $m$-dimensional current $T$ is said to be \textit{represented
by integration} if there exists a Radon measure $\measure T$ and
an $m$-vector valued, $\measure T$-measurable function, $\vec{T}$,
with $\mass{\vec{T}(x)}=1$ for $\measure T$-almost all $x\in\oset$,
such that 
\begin{equation}
T\left(\form\right)=\int_{\oset}\form(\vec{T})d\measure T,\quad\text{for all}\quad\form\in\D^{m}(\oset).\label{eq:current_by_integration}
\end{equation}
A sufficient condition for an $m$-dimensional current, $T$, to be
represented by integration is that $T$ is a current of finite mass,
\textit{i.e.}, $\Fmass T<\infty$. An $m$-current $T$ is said to
be \textit{locally normal} if both $T$ and $\bnd T$ are represented
by integration and is said to be a \textit{normal} current if it is
locally normal and of compact support. The notion of normal currents
leads to the definition 
\begin{equation}
\Fnormal T=\Fmass T+\Fmass{\bnd T},\label{eq:N_norm}
\end{equation}
and clearly, every $T\in\D_{m}(\oset)$ such that $\Fnormal T<\infty$
is a normal current. The vector space of all $m$-dimensional normal
currents in $\oset$ is denoted by $N_{m}\left(\oset\right)$. For
a compact set $\compact$ of $\oset$, set 
\begin{equation}
N_{m,\compact}\left(\oset\right)=N_{m}(\oset)\cap\left\{ T\mid\spt\left(T\right)\subset\compact\right\} .\label{eq:N_km}
\end{equation}

For each compact subset $\compact$ of $\oset$, define $F_{K}$,
the $K$-\textit{flat semi-norm} on $\D^{m}\left(\oset\right)$, by
\begin{equation}
F_{K}\left(\form\right)=\sup_{x\in\compact}\left\{ \|\form(x)\|,\|\cbnd\form(x)\|\right\} .\label{eq:Flat_norm_forms}
\end{equation}
Dually, the \textit{$K$-flat norm }for currents $T\in\D_{m}\left(\oset\right)$
is given by 
\begin{equation}
F_{\compact}(T)=\sup\left\{ T\left(\form\right)\mid F_{\compact}\left(\form\right)\leq1\right\} .\label{eq:Flat_norm_currents}
\end{equation}
Note that if $T\in\D_{m}\left(\oset\right)$ such that $F_{\compact}(T)<\infty$,
then, $\spt(T)\subset\compact$. For a given compact subset $\compact\subset\oset$,
the set $F_{m,\compact}(U)$ is defined as the $F_{\compact}$ closure
of $N_{m,\compact}(U)$ in $\D_{m}(U)$. In addition, set 
\begin{equation}
F_{m}(\oset)=\bigcup_{\compact}F_{m,\compact}(U),\label{eq:F_m(U)}
\end{equation}
 where the union is taken over all compact subsets $\compact$ of
$\oset$. An element in $F_{m}(\oset)$ is referred to as a \textit{flat
$m$-chain in $\oset$}. 

For $T\in\D_{m}(\oset)$ with $\spt(T)\subset\compact$ it can be
shown that $F_{\compact}(T)$ is given by 
\begin{equation}
F_{\compact}(T)=\inf\left\{ \Fmass{T-\bnd S}+\Fmass S\mid S\in\D_{m+1}(\oset),\,\spt(S)\subset\compact\right\} .\label{eq:Flat_norm_Whitney}
\end{equation}
By taking $S=0$ we note that 
\begin{equation}
F_{\compact}(T)\leq\Fmass T.\label{eq:F<M}
\end{equation}
In addition, any element $T\in F_{m,\compact}\left(\oset\right)$
may be represented by $T=R+\bnd S$ where $R\in\D_{m}(\oset)$, $S\in\D_{m+1}(\oset)$,
such that $\spt(R)\subset\compact$, $\spt(S)\subset\compact$, and
\begin{equation}
F_{\compact}(T)=\Fmass R+\Fmass S.\label{eq:T=00003DR+bndS}
\end{equation}
Flat chains have some desirable properties. We note that the boundary
of a flat $m$-chain is a flat $(m-1)$-chain. Moreover, as Section
\ref{sec:On-Lipschitz-mappings} will show, the flat topology is preserved
under Lipschitz maps. From a geometric point of view the notion of
a flat chain may be used to describe objects of irregular geometric
nature such as the Sierpinski triangle. The following representation
theorem reveals the measure theoretic regularity characterization
of flat $m$-chains. 
\begin{thm}
\cite[Section 4.1.18]{Federer1969} Let $T$ be a flat $m$-chain
in $\oset$ with $\spt(T)\subset\compact$. Then, for any $\delta>0$
and $E=\left\{ x\mid\mathrm{dist}\left(\compact,x\right)\leq\delta\right\} \subset\oset$,
the current $T$ may be represented by 
\begin{equation}
T=\lusb^{n}\wedge\eta+\bnd\left(\lusb^{n}\wedge\xi\right),
\end{equation}
such that $\eta$ is an $\lusb^{n}\rest\oset$-summable, $m$-vector
field, $\xi$ is a $\lusb^{n}\rest\oset$-summable $\left(m+1\right)$-vector
field and $\spt\left(\eta\right)\cup\spt\left(\xi\right)\subset E$.\label{thm:Federer_representation_flat_chains}
\end{thm}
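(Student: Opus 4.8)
The plan is to deduce this representation from the basic structure of flat chains already recalled in the excerpt, together with a diagonal/limit argument. Recall that by definition $T \in F_{m,\compact}(\oset)$ means $T$ is an $F_{\compact}$-limit of normal currents $T_j \in N_{m,\compact}(\oset)$. First I would use the Whitney characterization \eqref{eq:Flat_norm_Whitney} and the decomposition \eqref{eq:T=00003DR+bndS}: for each $j$ one can write $T - T_j = R_j + \bnd S_j$ with $R_j \in \D_m(\oset)$, $S_j \in \D_{m+1}(\oset)$, supports in $\compact$, and $\Fmass{R_j} + \Fmass{S_j} = F_{\compact}(T - T_j) \to 0$. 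Since each $T_j$ is normal, $T_j$ and $\bnd T_j$ are represented by integration against Radon measures; the mild regularization I need is to replace $T_j$ by a smooth current. Concretely, mollifying $T_j$ (convolution with a smooth kernel supported in a ball of radius $\delta_j \downarrow 0$) produces a current of the form $\lusb^n \wedge \eta_j + \bnd(\lusb^n \wedge \xi_j)$ with $\eta_j, \xi_j$ smooth, compactly supported in the $\delta_j$-neighborhood of $\compact$, and with $F_{\compact'}(T_j - (\lusb^n\wedge\eta_j + \bnd(\lusb^n\wedge\xi_j)))$ as small as we like on a slightly enlarged compact set $\compact'$. (This is where one uses that mollification converges in mass on normal currents and commutes with $\bnd$.)

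Combining the two previous steps, I get a sequence of currents $T_j' = \lusb^n \wedge \eta_j + \bnd(\lusb^n \wedge \xi_j)$, each of the desired \emph{shape}, with $\eta_j$ an $\lusb^n$-summable $m$-vector field and $\xi_j$ an $\lusb^n$-summable $(m+1)$-vector field, all supported in $E$, such that $F_E(T - T_j') \to 0$. It remains to pass to the limit and obtain a single $\eta$ and $\xi$. For this I would invoke the structure of the $F_E$-completion: one shows the pair-wise differences can be arranged (after passing to a rapidly converging subsequence) so that $\sum_j (\Fmass{\eta_{j+1} - \eta_j}_{L^1} + \Fmass{\xi_{j+1} - \xi_j}_{L^1}) < \infty$, whence $\eta_j \to \eta$ and $\xi_j \to \xi$ in $L^1(\lusb^n\rest\oset)$ with supports in $E$, and the corresponding currents converge in mass, hence in $F_E$. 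By uniqueness of $F_E$-limits in $\D_m(\oset)$ this forces $T = \lusb^n \wedge \eta + \bnd(\lusb^n \wedge \xi)$, which is the claim.

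The main obstacle is the $L^1$-convergence of the coefficient fields: the flat-norm smallness $\Fmass{R_j} + \Fmass{S_j} \to 0$ does not by itself control $\eta_j$ and $\xi_j$ separately, since the $\bnd$ term absorbs an arbitrary amount of "oscillating" mass. The fix is to be careful about \emph{which} decomposition one extracts at each stage: rather than decomposing $T - T_j$, one should decompose the telescoping differences $T_{j+1}' - T_j'$ and, crucially, use that the mass-minimizing decomposition in \eqref{eq:T=00003DR+bndS} is attained, so that the $L^1$-norms of the increments are themselves bounded by the flat norms of the increments (up to passing to a subsequence so the flat norms are summable). Handling the support condition $\spt(\eta)\cup\spt(\xi)\subset E$ is then routine: all approximants are supported in the fixed neighborhood $E$ and $L^1$-limits preserve this. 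With these points in place the argument is essentially a soft completeness-and-mollification argument rather than a hard estimate; I would cite \cite[Section 4.1.18]{Federer1969} for the technical heart if a fully self-contained treatment is not desired.
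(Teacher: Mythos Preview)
The paper does not prove this theorem at all: it is stated and attributed to \cite[Section~4.1.18]{Federer1969}, with no argument given. So there is nothing in the paper to compare your proof against; what follows is an assessment of your sketch on its own terms.

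Your outline is reasonable up to the point where you try to pass to the limit, but the step you yourself flag as ``the main obstacle'' is not actually repaired by the fix you propose. You write that one should take the mass-minimizing decomposition of the telescoping differences $T_{j+1}'-T_j' = R_j + \bnd S_j$ so that ``the $L^1$-norms of the increments are themselves bounded by the flat norms of the increments.'' The problem is that $R_j$ and $S_j$ coming out of \eqref{eq:T=00003DR+bndS} are merely currents of finite mass, i.e.\ represented by vector-valued Radon measures; there is no reason these measures are absolutely continuous with respect to $\lusb^n$, so you cannot extract $L^1$ vector fields from them and add them to $\eta_j,\xi_j$. Conversely, the ``obvious'' decomposition $T_{j+1}'-T_j' = \lusb^n\wedge(\eta_{j+1}-\eta_j)+\bnd\bigl(\lusb^n\wedge(\xi_{j+1}-\xi_j)\bigr)$ is of the right shape but its cost $\|\eta_{j+1}-\eta_j\|_{L^1}+\|\xi_{j+1}-\xi_j\|_{L^1}$ is in general much larger than $F_{\compact}(T_{j+1}'-T_j')$. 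So neither route gives the summability you need.

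What is missing is a mechanism that, at each stage, converts the finite-mass pieces $R_j,S_j$ into $\lusb^n$-absolutely continuous currents with $L^1$ norms bounded by $\Fmass{R_j},\Fmass{S_j}$, while producing a remainder whose flat norm is a \emph{fixed fraction} smaller. Mollification does the first part (convolution of a finite-mass current with a smooth kernel yields $\lusb^n\wedge\zeta$ with $\|\zeta\|_{L^1}\le\Fmass{\cdot}$); the second part requires the homotopy formula for translations applied to the pair $(R_j,S_j)$ jointly, exploiting the structure $T=R+\bnd S$ rather than treating $R$ and $S$ separately (since $\bnd R$ need not have finite mass on its own). This geometric contraction is the technical heart of Federer's argument in \cite[4.1.18]{Federer1969}, and it is precisely what your sketch does not supply. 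Citing Federer at the end, as you suggest, is of course fine, but then the proposal reduces to a citation rather than a proof.
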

A linear functional $\cochain$ defined on $F_{m}(\oset)$ such that
there exists $0<c<\infty$ with $\cochain(T)\leq cF_{\compact}(T)$
for any compact $K\subset U$ and $T\in F_{m,\compact}(\oset)$, is
referred to as a \textit{flat $m$-cochain}. The flat norm of a cochains
is given by 
\begin{equation}
F(\cochain)=\sup\left\{ \cochain(\chain)\mid\chain\in F_{m}(\oset),\,\, F_{\compact}(\chain)\leq1,\,\,\compact\subset\oset\right\} .
\end{equation}
By Theorem \ref{thm:Federer_representation_flat_chains}, a dual representation
for flat cochains is available by flat forms which we shall now introduce.

Given a differentiable mapping $u$ defined on an open set of $\reals^{n}$,
its derivative will be denoted by $Du$ and its partial derivative
with respect to the $j$-th coordinate will be denoted by $D_{j}u$.
For a smooth $m$-vector field $\eta$ in $\oset$, \textit{the divergence
}$\mathrm{div}\eta$ \textit{\emph{of}}\textit{ $\eta$ }is an $\left(m-1\right)$-vector
field in $\oset$ defined by 
\begin{equation}
\mathrm{div}\eta=\sum_{j=1}^{n}D_{j}\eta\rest dx_{j},\label{eq:Div_definition}
\end{equation}
where $dx_{i},\: i=1,\dots,n$ denote the dual base vectors relative
to the standard basis $e_{j},\: j=1,\dots,n$ in $\reals^{n}$ \cite[Section 4.1.6]{Federer1969}.
For an integrable $m$-form $\phi$ in $\oset$, the \textit{weak
exterior derivative }\textit{\emph{of}}\textit{ $\form$} is defined
as an $\left(m+1\right)$ form in $\oset$ denoted by $\wcbd\form$
and such that the equality 
\begin{equation}
\int_{\oset}\wcbd\form(\eta)d\lusb^{n}=-\int_{\oset}\form\left(\mathrm{div}\eta\right)d\lusb^{n},\label{eq:Weak_exterior_driv_def}
\end{equation}
holds for all compactly supported, smooth $(m+1)$-vector fields $\eta$
on $\oset$. The weak exterior derivative is simply the exterior derivative
taken in the distributional sense. Note that $\wcbd\form$ is uniquely
defined up to a set of $\lusb^{n}\rest\oset$-measure zero, thus,
for $\form\in\D^{m}(\oset)$, the relation $\wcbd\form=\cbnd\form$
holds $\lusb^{n}\rest\oset$-almost everywhere. 

Differential forms whose components are Lipschitz continuous are referred
to as \textit{sharp $m$-forms} (adopting Whitney's terminology \cite[Section V.10]{Whitney1957}).
By Rademacher's theorem, the exterior derivative for sharp $m$-forms
exists $\lusb^{n}\rest\oset$-almost everywhere and the existence
of the weak exterior derivative follows. Sharp forms are clearly a
generalization of the notion of a smooth differential form and a further
generalization is given by flat forms where the Lipschitz continuity
is relaxed.
\begin{defn}
An $m$-form $\form$ in $\oset$ is said to be \textit{flat }if

\begin{equation}
F(\form)=\sup_{\eta,\xi}\left\{ \int_{\oset}\left(\phi(\eta)+\wcbd\form(\xi)\right)d\lusb^{n}\right\} <\infty,\label{eq:Flat_form_def}
\end{equation}
where $\eta$ and $\xi$ are respectively $m$ and $(m+1)$ compactly
supported, $\lusb^{n}\rest\oset$-summable vector fields such that
\begin{equation}
\int_{\oset}\left(\|\xi\|+\|\eta\|\right)d\lusb^{n}=1.
\end{equation}

It is further observed that for $\form$, a flat $m$-form in $\oset$,
\begin{equation}
F(\form)=\ess\sup_{x\in\oset}\left\{ \|\form(x)\|,\|\wcbd\form(x)\|\right\} .\label{eq:Flat_form_def_result}
\end{equation}

Alternative definitions for flat forms may be found in \cite[Section IX.7]{Whitney1957}
and \cite[Section 5.5]{Heinonen2005}.\end{defn}
\begin{rem}
For $\form$, a flat $m$-form in $\oset$, and $\omega$, a flat
$r$-form in $\oset$, $\form\wedge\omega$ is a flat $(m+r)$-form
in $\oset$. One may use the definition of the weak exterior derivative
to show that 
\begin{equation}
\wcbd(\form\wedge\omega)=\wcbd\form\wedge\omega+\left(-1\right)^{m}\form\wedge\wcbd\omega.\label{eq:exterior_deriv_wedge_flat_form}
\end{equation}

\end{rem}
The representation theorem of flat cochains is traditionally referred
to as Wolfe's representation theorem, \cite[Chapter IX]{Whitney1957},
\cite[Section 4.1.19]{Federer1969}. It states that any flat $m$-cochain
$X$ in $\oset$ is represented by a flat $m$-form denoted by $\fform{\cochain}$
such that 
\begin{eqnarray}
\cochain\left(\lusb^{n}\wedge\eta+\bnd\left(\lusb^{n}\wedge\xi\right)\right) & = & \int_{\oset}\left[\fform{\cochain}(\eta)+\wcbd\fform{\cochain}(\xi)\right]d\lusb^{n},\label{eq:Wolfe's_representation}
\end{eqnarray}
for any $\eta$ and $\xi$, compactly supported, $\lusb^{n}\rest\oset$-summable
$m$ and $(m+1)$-vector fields, respectively. It is further noted
that the flat norm $F(\cochain)$ for the cochain $X$ is given by
\begin{eqnarray}
F(\cochain) & = & \mathrm{ess}\sup_{x\in\oset}\left\{ \|\fform{\cochain}(x)\|,\|\wcbd\fform{\cochain}(x)\|\right\} \equiv F(\fform{\cochain}).\label{eq:F_norm_cochain}
\end{eqnarray}
The \textit{coboundary} of a flat $m$-cochain $\cochain$ is defined
as the flat $(m+1)$-cochain $\cbnd\cochain$ such that 
\begin{equation}
\cbnd\cochain(\chain)=\cochain(\bnd\chain),\quad\text{for all}\quad\chain\in F_{m}(\oset),\label{eq:coboundary_cochain}
\end{equation}
where it is noted that the same notation is used for the coboundary
operator and the exterior derivative. The coboundary is the adjoint
of the boundary operator and thus a continuous linear operator taking
flat $m$-chains to flat $(m+1)$-chains. It follows from the representation
theorem of flat chains that the flat $(m+1)$-cochain $\cbnd\cochain$
is represented by the flat $(m+1)$-form $\fform{\cbnd\cochain}=\wcbd\fform{\cochain}$.
The last equality is used as the definition of the exterior derivative
of a flat form in \cite[Section IX.12]{Whitney1957}.

Given a flat $m$-cochain $\cochain$ in $\oset$ and a flat $r$-cochain
$Y$ in $\oset$, then, $\cochain\wedge Y$ is an $(m+r)$-cochain
represented by the flat $(m+r)$-form $\fform{\cochain\wedge Y}=\fform{\cochain}\wedge\fform Y$,
and for a flat $\left(m+r\right)$-chain $T=\lusb^{n}\wedge\eta+\bnd\left(\lusb^{n}\wedge\xi\right)$,
$\cochain\wedge Y(T)$ is defined by Equation(\ref{eq:Wolfe's_representation}).
Moreover, Equation (\ref{eq:exterior_deriv_wedge_flat_form}) implies
that 
\begin{equation}
\cbnd(\cochain\wedge Y)=\cbnd\cochain\wedge Y+(-1)^{m}\cochain\wedge\cbnd Y.\label{eq:coboundary_of_wedge}
\end{equation}

For a flat $m$-cochain $X$ and a flat $r$-chain $T$, such that
$m\leq r$, the interior product $\cochain\irest T$ is defined as
a flat $\left(r-m\right)$-chain such that 
\begin{equation}
\cochain\irest T(\omega)=\left(\cochain\wedge\omega\right)(T),\quad\text{for all}\quad\omega\in\D^{r-m}(\oset),\label{eq:interior_product_chain}
\end{equation}
where $\cochain\wedge\omega$ is the flat $r$-cochain represented
by the flat $r$-form $\fform{\cochain}\wedge\omega$.

\section{Sets of finite perimeter, bodies and material surfaces\label{sec:Sets_of_finite_perimeter}}

In this section we lay down the basic assumptions regarding the material
universe. Sets of finite perimeter, or Caccioppoli sets, will play
a central role in the proposed framework. We first recall some of
the properties of sets of finite perimeter. Extended presentations
 of the subject may be found in \cite{Giorgi2006,Federer1969,Ziemer1983,Ziemer1989}. 

Let $U$ be a Borel set in an open subset of $\reals^{n}$ and $B(x,r)$
be the ball centered at $x\in\reals^{n}$ with radius $r$. Define\textit{
the $\oset$-density of the point $x$} by 
\begin{equation}
\dns xU=\lim_{r\to0}\frac{\lusb^{n}\left(U\cap\ball xr\right)}{\lusb^{n}\left(\ball xr\right)},\label{eq:U_density}
\end{equation}
where the limit exists. The measure theoretic boundary, $\mtb\left(U\right)$,
of the set $U$ is defined by 
\begin{equation}
\mtb\left(U\right)=\left\{ x\mid0<\dns xU<1\right\} .\label{eq:measure_theoretic_boundary_def}
\end{equation}
 
\begin{defn}
A Borel set $U$ in $\reals^{n}$ is said to be \textit{a set of finite
perimeter} if $\lusb^{n}\left(U\right)<\infty$ and $H^{n-1}\left(\mtb\left(U\right)\right)<\infty$,
where $H^{n-1}\left(\mtb\left(U\right)\right)$ is the $(n-1)$-Hausdorff
measure of $\mtb\left(U\right)$.\label{def:Set-of-finite-perimeter}
\end{defn}
Definition \ref{def:Set-of-finite-perimeter} is adopted in \cite{Ziemer1983}
as the definition for the class of admissible bodies. Several equivalent
definitions for a set of finite perimeter may be found in the literature.
In \cite[Section 5.4.1]{Ziemer1989} a set of finite perimeter is
viewed as a set whose characteristic function is a function of bounded
variation. In \cite[Section 4.5]{Federer1969} a set of finite perimeter
is viewed as a set which induces a locally integral current. In this
work, Definition \ref{def:Set-of-finite-perimeter} is selected for
its intuitive geometric interpretation. For a set of finite perimeter,
the exterior normal $\nu(x)$ to $U$ exists $H^{n-1}$-almost everywhere
in $\mtb(\oset)$ thus making a generalized version of the Gauss-Green
theorem applicable.

At this point we adopt the framework of \cite{Silhavy1985} for the
class of admissible bodies and material surfaces. Let $\body$ be
an open set in $\reals^{n}$. A\textit{ body} in $\body$ is denoted
by $\part$ and is postulated to be a set of finite perimeter in $\body$.
Stricly speaking, a set of finite perimeter is determined up to a
set of $\lusb^{n}$ measure zero, thus as a point set, it is not uniquely
defined. Formally, each set of finite perimeter determines an equivalence
class of sets. A unique representation of a body is given by the identification
of the body $\part$ with $T_{\part}$, an $n$-current in $\body$
defined as $T_{\part}=\left(\lusb^{n}\rest\part\right)\wedge e_{1}\wedge\dots\wedge e_{n}$.
By Equation (\ref{eq:L^n_m_current}), 
\begin{equation}
T_{\part}(\omega)=\int_{\part}\omega(x)\left(e_{1}\wedge\dots\wedge e_{n}\right)d\lusb_{x}^{n},\quad\text{for all}\quad\omega\in\D^{n}(\body).\label{eq:Current_T_part}
\end{equation}
Using the terminology of currents represented by integration, $\measure{T_{\part}}=\lusb^{n}\rest\part$
and $\vec{T}_{\part}=e_{1}\wedge\dots\wedge e_{n}$ are the Radon
measure and unit $n$-vector associated with the current $T_{\part}$.

Objects of dimension $(n-1)$ for which one can compute the flux will
be referred to as material surfaces. Formally, a \textit{material
surface} is defined as a pair $\surface=(\hat{\surface},v)$ where
$\hat{\surface}$ is a Borel subset of $\body$ such that for some
body $\part$ we have $\hat{\surface}\subset\mtb(\part)$ and $v$
is the exterior normal of $\part$ such that $v(x)=v_{\part}(x)$
is defined $H^{n-1}$-almost everywhere on $\hat{\surface}$. Let
$v^{*}(x)$ be a the covector defined by 
\begin{equation}
v^{*}(x)(u)=v(x)\cdot u,\quad\text{for all}\quad u\in\reals^{n},
\end{equation}
and set $\vec{T}_{\surface}$ as the $(n-1)$-vector 
\begin{equation}
\vec{T}_{\surface}(x)=v^{*}(x)\irest e_{1}\wedge\dots\wedge e_{n}.\label{eq:T_S_def_vec_def}
\end{equation}
It is easy to show that $\vec{T}_{\surface}(x)$ is a unit, simple
$(n-1)$-vector $H^{n-1}$-almost everywhere on $\hat{\surface}$.
 We use $T_{\surface}$ to denote the $\left(n-1\right)$-current
in $\body$ induced by the material surface $\surface$, such that
$\measure{T_{\surface}}=H^{n-1}\rest\hat{\surface}$ and $\vec{T}_{\surface}(x)$
are the Radom measure and $(n-1)$-vector associated with $T_{\surface}$,
and 
\begin{equation}
T_{\surface}(\omega)=\int_{\hat{\surface}}\omega(x)(\vec{T}_{\surface}(x))dH_{x}^{n-1},\quad\text{for all}\quad\omega\in\D^{n-1}(\body).\label{eq:T_S_current}
\end{equation}
The unit $(n-1)$-vector $\vec{T}_{\surface}(x)$ is viewed as the
natural $(n-1)$-vector \textit{tangent }\textit{\emph{to the material
surface}}\textit{ $\surface$}. By Equation (\ref{eq:L^n_m_current})
we may write 
\begin{equation}
T_{\surface}=\left(H^{n-1}\rest\hat{\surface}\right)\wedge\vec{T}_{\surface}.\label{eq:T_S_def}
\end{equation}

Consider the material surface $\bnd\part=\left(\mtb(\part),\nu_{\part}\right)$
naturally induced by the body $\part$. One has, 
\begin{equation}
\begin{split}T_{\bnd\part}(\omega) & =\int_{\mtb(\part)}\omega(x)(\vec{T}_{\bnd\part}(x))dH_{x}^{n-1},\\
 & =\int_{\mtb(\part)}\left(\omega(x)\irest e_{1}\wedge\dots\wedge e_{n}\right)\cdot\nu_{\part}(x)dH_{x}^{n-1},\\
 & =\int_{\part}\cbnd\omega(x)\left(e_{1}\wedge\dots\wedge e_{n}\right)d\lusb_{x}^{n},\\
 & =T_{\part}\left(\cbnd\omega\right),\\
 & =\bnd T_{\part}(\omega).
\end{split}
\label{eq:Stoke's_thm}
\end{equation}
where in the third line above Gauss-Green theorem \cite[Section 4.5.6]{Federer1969}
was used. Thus, it is noted that $T_{\bnd\part}=\bnd T_{\part}$ as
expected, and the material surface $\surface$ associated with the
body $\part$ may be written as 
\begin{equation}
T_{\surface}=\left(\bnd T_{\part}\right)\rest\hat{\surface}.\label{eq:T_S=00003Dbnd_T_P_res_S}
\end{equation}
Since a Radon measure is a Borel regular measure, the current $\bnd T_{\part}\rest\hat{\surface}$
is well defined for any Borel set $\hat{\surface}$ \cite[p. 356]{Federer1969}. 

For each $T_{\part}$, we observe that $\Fmass{T_{\part}}=\lusb^{n}\left(\part\right)$
and $\Fmass{\bnd T_{\part}}=H^{n-1}\left(\mtb(\part)\right)$ correspond
to the ``volume'' of the body and ``area'' of its boundary, respectively.
By Equation (\ref{eq:N_norm}) one has $\Fnormal{T_{\part}}=\lusb^{n}\left(\part\right)+H^{n-1}\left(\mtb(\part)\right)<\infty$,
so that the current $T_{\part}$ is a normal $n$-current in $\body$.
The open set $\body$ is referred to as \textit{\emph{the }}\textit{universal
body} and we define\emph{ }\textit{\emph{the }}\textit{class of admissible
bodies,} $\Omega_{\body}$, as the collection of all bodies in the
universal body $\body$, \textit{i.e.}, 
\begin{equation}
\Omega_{\body}=\left\{ T_{\part}\mid\part\subset\body,T_{\part}=\lusb^{n}\rest\part\in N_{n}\left(\body\right)\right\} .
\end{equation}
The result obtained in \cite{M.E.Gurtin1986} implies that in case
$\body$ is assumed to be a set of finite perimeter, $\Omega_{\body}$
would have the structure of a Boolean algebra and would form a material
universe in the sense of Noll \cite{Noll1973}. In Section \ref{sec:Generalized-bodies},
a generalized class of admissible bodies will be defined for which
a requirement that $\body$ is a bounded set will be sufficient in
order to construct a Boolean algebra structure. 

The collection of all material surfaces in $\body$ will be denoted
by $\bnd\Omega_{\body}$, so that
\begin{equation}
\bnd\Omega_{\body}=\left\{ T_{\surface}\mid T_{\surface}=\left(\bnd T_{\part}\right)\rest\hat{\surface},T_{\part}\in\Omega_{\body}\right\} .
\end{equation}
By the definition of $T_{\surface}$ it follows that $\Fmass{T_{\surface}}=H^{n-1}\left(\hat{\surface}\right)$
for each $T_{\surface}\in\bnd\Omega_{\body}$. Thus $T_{\surface}$
is a flat $(n-1)$-chain of finite mass. The material surfaces $T_{\surface}$
and $T_{\surface'}$ are said to be \textit{compatible} if there exists
a body $T_{\part}$ such that $T_{\surface}=\left(\bnd T_{\part}\right)\rest\hat{\surface}$
and $T_{\surface'}=\left(\bnd T_{\part}\right)\rest\hat{\surface'}$.
The material surfaces $T_{\surface}$ and $T_{\surface'}$ are said
to be \emph{disjoint} if $\mathrm{clo}\bigl(\hat{\surface}\bigr)\cap\mathrm{clo}\bigl(\hat{\surface}'\bigr)=\varnothing.$

\section{Lipschitz mappings and Lipschitz chains\label{sec:On-Lipschitz-mappings}}

Lipschitz mappings will model configurations of bodies in space. In
this section we review briefly some of their relevant properties.

A map $\Lmap:\oset\to V$ from an open set $\oset\subset\reals^{n}$
to an open set $V\subset\reals^{m}$, is said to be a\textit{ (globally)
Lipschitz map} if there exists a number $c<\infty$ such that $\mass{\Lmap(x)-\Lmap(y)}\leq c\mass{x-y}$
for all $x,\, y\in\oset$. The \textit{Lipschitz constant }of $\Lmap$
is defined by 
\begin{equation}
\Lip_{\Lmap}=\sup_{x,y\in\oset}\frac{|\Lmap(y)-\Lmap(x)|}{|y-x|}.\label{eq:Lipschitz-constant}
\end{equation}
The map $\Lmap:\oset\to V$ is said to be \textit{locally Lipschitz}
if for every $x\in\oset$ there is some neighborhood $\oset_{x}\subset\oset$
of $x$ such that the restricted map $\Lmap\mid_{\oset_{x}}$ is a
Lipschitz map.

Let $\Lmap:\oset\to\reals^{m}$ be a locally Lipschitz map defined
on the open set $\oset\subset\reals^{n}$, then for every $\compact$,
a compact subset of $\oset$, the restricted map $\Lmap\mid_{\compact}$
is globally Lipschitz in the sense that $\Lip_{\Lmap,\compact}$,
the $\compact$-Lipschitz constant of the map $\Lmap\mid_{\compact}$,
given by 
\begin{equation}
\Lip_{\Lmap,\compact}=\sup_{x,y\in\compact}\frac{\mass{\Lmap(x)-\Lmap(y)}}{\mass{x-y}},
\end{equation}
is finite.

\subsection{Differential topology of Lipschitz maps}

The vector space of locally Lipschitz mappings from the open set $\oset\subset\reals^{n}$
to the open set $V\subset\reals^{m}$ is denoted by $\Lip\left(\oset,V\right)$.
For a compact subset $\compact\subset\oset$, define the semi-norm
\begin{equation}
\norm{\Lmap}{\Lip,\compact}=\max\left\{ \norm{\Lmap\mid_{\compact}}{\infty},\Lip_{\Lmap,\compact}\right\} ,\label{eq:Lipschit_semi_norm}
\end{equation}
 on $\Lip\left(\oset,V\right)$, where, 
\begin{equation}
\|\Lmap\mid_{\compact}\|_{\infty}=\sup_{x\in\compact}\mass{\Lmap(x)}.
\end{equation}

The vector space $\Lip(\oset,V)$ is endowed with the Lipschitz strong
topology (see \cite{Fukui2005}). It is the analogue of Whitney's
topology (strong topology) for the space of differentiable mappings
between open sets (see \cite[p.~35]{Hirsch}) and is defined as follows.
Given $\Lmap\in\Lip(\oset,V)$, for some indexing set $\Lambda$,
let $\mathcal{U}=\left\{ U_{\lambda}\right\} _{\lambda\in\Lambda}$
be an open, locally finite cover of $\oset\subset\reals^{n}$, $\mathcal{V}=\left\{ V_{\lambda}\right\} _{\lambda\in\Lambda}$
an open cover of $V\subset\reals^{m}$ and $\mathcal{K}=\left\{ \compact_{\lambda}\right\} _{\lambda\in\Lambda}$
a family of compact subsets in $\oset$ such that $\compact_{\lambda}\subset U_{\lambda}$
and $\Lmap(\oset_{\lambda})\subset V_{\lambda}$ for all $\lambda\in\Lambda$.
A neighborhood $B^{\Lip}\left(\Lmap,\mathcal{U},\mathcal{V},\delta,\mathcal{K}\right)$
of $\Lmap$ in the strong topology is defined by $\mathcal{U},\mathcal{V},\mathcal{K}$
as above and a family of positive numbers, $\delta=\left\{ \delta_{\lambda}\right\} _{\lambda\in\Lambda}$, as
the collection of all $g\in\Lip\left(\oset,V\right)$ such that $g\left(\compact_{\lambda}\right)\subset V_{\lambda}$
and $\norm{\Lmap-g}{\Lip,\compact_{\lambda}}<\delta_{\lambda}$, \emph{i.e.},
\begin{equation}
B^{\Lip}\left(\Lmap,\mathcal{U},\mathcal{V},\delta,\mathcal{K}\right)=\left\{ g\in\Lip(\oset,V)\mid\: g\left(\compact_{\lambda}\right)\subset V_{\lambda},\:\norm{\Lmap-g}{\Lip,\compact_{\lambda}}<\delta_{\lambda}\right\} .\label{eq:Lipschitz_strong_topology}
\end{equation}

\begin{defn}
A map $\emb:\oset\too V$, with $\oset\subset\reals^{n}$, $V\subset\reals^{m}$,
open sets such that $m\geq n$, is said to be a \textit{bi-Lipschitz
}map if there are numbers $0<c\leq d<\infty$, such that \cite[p.~78]{Heinonen2000}
\begin{equation}
c\leq\frac{|\emb(x)-\emb(y)|}{\mass{x-y}}\leq d,\qquad\text{for all}\quad x,\, y\in\oset,\; x\not=y.\label{eq:bi-Lipschitz_constant}
\end{equation}
Setting $L=\max\left\{ \frac{1}{c},d\right\} $, 
\begin{equation}
\frac{1}{L}\leq\frac{|\emb(x)-\emb(y)|}{\mass{x-y}}\leq L,\qquad\text{for all}\quad x,\, y\in\oset,\; x\not=y,
\end{equation}
and in such a case $\emb$ is said to be $L$-bi-Lipschitz. 

The map $\Lmap:\oset\to V$, where $\oset\subset\reals^{n}$ and $V\subset\reals^{m}$
are open sets such that $m\geq n$, is a \textit{Lipschitz immersion}
if for every $x\in\oset$ there is a neighborhood $U_{x}\subset\oset$
of $x$ such that $\Lmap\mid_{U_{x}}$ is a bi-Lipschitz map, \textit{i.e.},
there are $0<c_{x}\leq d_{x}<\infty$, and 
\begin{equation}
c_{x}\leq\frac{|\emb(y)-\emb(z)|}{\mass{y-z}}\leq d_{x},\quad\text{for all}\quad y,\, z\in U_{x},\; y\not=z.
\end{equation}

\end{defn}

\begin{defn}
A Lipschitz map $\emb:\oset\to V$ is said to be a\textit{ Lipschitz
embedding} if it is a Lipschitz immersion and a homeomorphism of $\oset$
onto $\emb(\oset)$.

The following theorem pertaining to the set of Lipschitz embeddings
is given in \cite{Fukui2005} and its proof is analogous to the case
of differentiable mappings as in \cite[p.~36--38]{Hirsch}.\end{defn}
\begin{thm}
The set $\Limb(\oset,V)$ is open in $\Lip(\oset,V)$ with respect
to the Lipschitz strong topology \label{thm:Lipschitz_embedding_open_set}
\end{thm}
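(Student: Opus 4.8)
The plan is to mimic the classical argument for $C^1$ embeddings found in Hirsch \cite[p.~36--38]{Hirsch}, adapting the estimates to the Lipschitz setting. Fix $\emb\in\Limb(\oset,V)$; I must exhibit a strong-topology neighborhood $\colb{\emb}{\mathcal{U}}{\mathcal{V}}\times\{\delta\}\times\mathcal{K}$ consisting entirely of Lipschitz embeddings. The argument splits into two parts, corresponding to the two clauses in the definition of a Lipschitz embedding: remaining a Lipschitz immersion (a local, quantitative condition), and remaining a homeomorphism onto its image (a global condition). The immersion part is the easy half: since $\emb$ is a Lipschitz immersion, each $x$ has a neighborhood on which $\emb$ is bi-Lipschitz with constants $0<c_x\le d_x<\infty$; passing to a slightly smaller compact neighborhood $\compact_x$ with $\emb\mid_{\compact_x}$ still $c_x$-bi-Lipschitz from below, any $g$ with $\norm{\emb-g}{\Lip,\compact_x}<\delta_x$ for $\delta_x$ small (say $\delta_x<c_x/2$) satisfies
\[
|g(y)-g(z)|\ge|\emb(y)-\emb(z)|-|(\emb-g)(y)-(\emb-g)(z)|\ge\left(c_x-\delta_x\right)|y-z|,
\]
and similarly an upper bound $(d_x+\delta_x)|y-z|$, so $g\mid_{\compact_x}$ is bi-Lipschitz; organizing the $\compact_x$'s into the locally finite family $\mathcal{K}$ produces the immersion condition for all $g$ in the neighborhood.

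The harder half is global injectivity and the homeomorphism property, where the Lipschitz analogue of the $C^1$ inverse function theorem / tubular neighborhood machinery is needed. I would proceed as follows. Since being a local bi-Lipschitz immersion already gives local injectivity for every $g$ in the neighborhood constructed above, the only possible failure is a ``long-range'' identification $g(p)=g(q)$ with $p,q$ far apart. One shows this cannot happen by a compactness/continuity argument: consider the ``diagonal-avoiding'' set and use that $\emb$, being a homeomorphism onto its image and a proper-on-compacta map, keeps points that are $r$-separated at least $\rho(r)>0$ apart in the image for each compact region; then choose the $\delta_\lambda$'s small enough, relative to a fixed exhaustion of $\oset$ by compacta, that the perturbation $\norm{\emb-g}{\Lip,\compact_\lambda}<\delta_\lambda$ cannot close up this gap. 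One must be careful that $\oset$ is only open, not compact, so the standard trick is to exhaust $\oset = \bigcup_k \compact^{(k)}$ by compacta, run the argument on each $\compact^{(k)}$, and exploit that a strong-topology neighborhood controls $g$ locally on each member of a locally finite family; because $\emb$ is a homeomorphism onto $\emb(\oset)$, a point escaping to the ``boundary'' of $\oset$ has image escaping the image, which combined with the local bound rules out identifications across different ``shells'' of the exhaustion. Finally, a locally injective, continuous, locally bi-Lipschitz map which is globally injective is automatically a homeomorphism onto its (locally compact) image, with continuous inverse supplied by the lower bi-Lipschitz bounds.

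The main obstacle is the global injectivity step: unlike the immersion estimate, which is a one-line triangle inequality, ruling out distant self-intersections requires using the strong topology in an essential way (the pointwise-controlled, locally finite nature of the neighborhoods) together with the fact that $\emb$ is a \emph{homeomorphism} onto its image, so that separation of points is quantitatively stable on compacta and degeneration can only occur ``at infinity'' in $\oset$. Since the paper explicitly says the proof is ``analogous to the case of differentiable mappings as in \cite[p.~36--38]{Hirsch}'' and attributes the result to Fukui \cite{Fukui2005}, I would present the immersion estimate in full and then indicate that the homeomorphism part follows verbatim from Hirsch's argument once the $C^1$-derivative estimates are replaced by the bi-Lipschitz constant estimates above, keeping the exhaustion-by-compacta bookkeeping to handle the non-compactness of $\oset$.
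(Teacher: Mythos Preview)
The paper does not actually prove this theorem: it states the result, attributes it to Fukui \cite{Fukui2005}, and remarks that the argument is analogous to the $C^{1}$ case in Hirsch \cite[p.~36--38]{Hirsch}. Your proposal is precisely a fleshed-out version of that cited argument, so it is in complete agreement with what the paper indicates; there is nothing further to compare.
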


\subsection{Maps of currents induced by Lipschitz maps}

Since our objective is to represent bodies as currents, and in particular,
as flat chains, and since we wish to represent configurations as Lipschitz
mappings, we exhibit in the following the basic properties of the
images of currents and chains under Lipschitz mappings.

Let $T$ be a current on $U$ and for open sets $\oset\subset\reals^{n}$
and $V\subset\reals^{m}$, let $\Lmap:\oset\too V$ be a smooth map
whose restriction to $\spt(T)$ is a proper map. For any $r$-form
$\omega$ on $V$, the map $\Lmap$ induces a form $\Lmap^{\#}\left(\omega\right)$,
the \textit{pullback of $\omega$ by $\Lmap$,} defined pointwise
by 
\begin{equation}
\left(\Lmap^{\#}\left(\omega\right)(x)\right)(v_{1}\wedge\dots\wedge v_{r})=\left(\omega\left(\Lmap\left(x\right)\right)\right)\left(D\Lmap(v_{1})\wedge\dots\wedge D\Lmap(v_{r})\right),\label{eq:pullback_flat_form}
\end{equation}
for all $v_{1},\dots v_{r}\in\reals^{n}$. It is observed that since
$\Lmap$ is proper only on $\spt(T)$, for a form $\omega$ with a
compact support, $\spt(\Lmap^{\#}(\omega))$ need not be compact.
However, for a real valued function $\zeta$ defined on $\oset$ which
is compactly supported and $\zeta(x)=1$ for all $x$ in a neighborhood
of $\spt\left(T\right)\cap\spt(\Lmap^{\#}(\omega))$, the smooth form
$\zeta\Lmap^{\#}\left(\omega\right)$ is of compact support. Thus,
the \textit{pushforward }$\Lmap_{\#}\left(T\right)$ \textit{\emph{of
$T$ by}}\emph{ }$\Lmap$ may be defined as the current on $V$ given
by 
\begin{equation}
\Lmap_{\#}\left(T\right)\left(\omega\right)=T\left(\zeta\Lmap^{\#}\left(\omega\right)\right),\quad\text{for all}\quad\omega\in\D^{r}\left(V\right),
\end{equation}
for any $\zeta$ with the properties given above \cite[Section 2.3]{Giaquinta1998}.
The definition of $\Lmap_{\#}\left(T\right)(\omega)$ is independent
of $\zeta$ and thus will be omitted in the following.  The pushforward
operation satisfies
\begin{eqnarray}
\bnd\Lmap_{\#}\left(T\right) & = & \Lmap_{\#}\left(\bnd T\right),\label{eq:bndF=00003DFbnd}\\
\spt\left(\Lmap_{\#}T\right) & \subset & \Lmap\left\{ \spt\left(T\right)\right\} .\label{eq:spt(FT)=00003DFspt(T)}
\end{eqnarray}
By a direct calculation one obtains that 

\begin{equation}
\Fmass{\Lmap_{\#}\left(T\right)}\leq\left(\sup_{x\in\compact}\mass{D\Lmap(x)}\right)^{r}\Fmass T.
\end{equation}
Applying Equation (\ref{eq:N_norm})  it follows that
\begin{equation}
N\left(\Lmap_{\#}\left(T\right)\right)\leq N(T)\sup\left\{ \left(\sup_{x\in\compact}\mass{D\Lmap(x)}\right)^{r},\left(\sup_{x\in\compact}\mass{D\Lmap(x)}\right)^{r-1}\right\} ,
\end{equation}
and by (\ref{eq:T=00003DR+bndS}),
\begin{equation}
F_{\Lmap\left\{ \compact\right\} }\left(\Lmap_{\#}\left(T\right)\right)\leq F_{\compact}(T)\sup\left\{ \left(\sup_{x\in\compact}\mass{D\Lmap(x)}\right)^{r},\left(\sup_{x\in\compact}\mass{D\Lmap(x)}\right)^{r+1}\right\} ,
\end{equation}
 where $\Lmap\left\{ \compact\right\} $ is the image of the set $\compact$
under the map $\Lmap$.

In case $\Lmap:\oset\too V$ is a locally Lipschitz map, the map $\Lmap_{\#}$
cannot be defined as in the case of smooth maps. However, given any
compact $K\subset U$, for $T\in F_{r,\compact}\left(U\right)$, one
may define the current $\Lmap_{\#}\left(T\right)$ as a weak limit.
Let $\left\{ \Lmap_{\tau}\right\} $, $\tau\in\reals^{+}$, be a family
of smooth approximations of $\Lmap$ obtained by mollifiers \cite[Section 4.1.2]{Federer1969}.
(It is observed that flat chains have compact supports so that it
is not necessary to require that $\Lmap$ is proper.) Set 
\[
\Lmap_{\#}T(\omega)=\lim_{\tau\to0}\Lmap_{\tau\#}T(\omega),\quad\text{for all}\quad\omega\in\D^{r}(V).
\]
 The sequence $\left\{ \Lmap_{\tau\#}\left(T\right)\right\} $ is
a Cauchy sequence with respect to the flat norm  so that the limit
is well defined and one may write 
\begin{equation}
\Lmap_{\#}\left(T\right)=\lim_{\tau\to0}\Lmap_{\tau\#}\left(T\right).\label{eq:Lipschitz_image_of_chain}
\end{equation}
As a result, the locally Lipschitz map $\Lmap:U\to V$ induces a map
of flat chains 
\[
\Lmap_{\#}:F_{r}\left(U\right)\to F_{r}\left(V\right).
\]

Properties (\ref{eq:bndF=00003DFbnd}) and (\ref{eq:spt(FT)=00003DFspt(T)})
hold for the map $\Lmap_{\#}$ induced by a locally Lipschitz map
$\Lmap$ and 
\begin{equation}
\Fmass{\Lmap_{\#}\left(T\right)}\leq\Fmass T\left(\Lip_{\Lmap,\spt(T)}\right)^{r}.\label{eq:Lipschitz_mass_norm}
\end{equation}
It follows that for normal currents 
\begin{equation}
\begin{split}\Lmap_{\#}(T) & \in N_{r,\Lmap(K)}(V),\quad\text{for all}\quad T\in N_{r,K}(U),\\
N\left(\Lmap_{\#}\left(T\right)\right) & \leq N(T)\sup\left\{ \left(\Lip_{\Lmap,\spt(T)}\right)^{r},\left(\Lip_{\Lmap,\spt(T)}\right)^{r-1}\right\} ,
\end{split}
\label{eq:Lipschitz_normal_norm}
\end{equation}
and for flat chains 
\begin{equation}
\begin{split}\Lmap_{\#}(T) & \in F_{r,\Lmap\left\{ K\right\} }(V),\quad\text{for all}\quad T\in F_{r,K}(U),\\
F_{\Lmap\left\{ K\right\} }\left(\Lmap_{\#}\left(T\right)\right) & \leq F_{K}(T)\sup\left\{ \left(\Lip_{\Lmap,\spt(T)}\right)^{r},\left(\Lip_{\Lmap,\spt(T)}\right)^{r+1}\right\} .
\end{split}
\label{eq:Lipschitz_flat_norm}
\end{equation}
See \cite[Section 4.1.14]{Federer1969} and \cite[Section 2.3]{Giaquinta1998}
for an extended treatment. 

In Whitney's theory, the Lipschitz image of a flat chain $\chain$
is defined as follows \cite[Chapter X]{Whitney1957}. First, for $P=\spt(\chain)$
consider a full sequence of simplicial subdivision $\left\{ P_{i}\right\} $
such that $P_{i+1}$ is a simplicial refinement of $P_{i}$. Next,
let $\left\{ \Lmap_{i}\right\} $ be a sequence of piecewise affine
approximations of the Lipschitz map $\Lmap$ such that $\Lmap_{i}(v)=\Lmap(v)$
for all vertices $v$ in the simplicial complex $P_{i}$. The chain
$\Lmap_{\#}\left(\chain\right)$ is defined as the limit in the flat
norm of 
\begin{equation}
\Lmap(\chain)=\lim_{i\to\infty}\Lmap_{i}(\chain).
\end{equation}
Although Whitney's definition of $\Lmap_{\#}(\chain)$ differs from
that of Federer, the resulting chains are equivalent. 

For a locally Lipschitz map $\Lmap:\oset\too V$, and a flat $m$-cochain
$X$ in $V$, let $\Lmap^{\#}\left(\cochain\right)$ be the flat $r$-cochain
in $U$ defined by the relation 
\begin{equation}
\Lmap^{\#}\left(\cochain\right)\left(T\right)=\cochain\left(\Lmap_{\#}\left(T\right)\right),\quad\text{for all}\quad T\in F_{r}(U).\label{eq:pullback_flat_cochain}
\end{equation}
The flat $r$-cochain $\Lmap^{\#}(\cochain)$ is represented by the
flat $r$-form $\Lmap^{\#}\left(D_{X}\right)$, the pullback of the
flat $r$-form $D_{X}$ representing $\cochain$ by the map $\Lmap$.
Note that it follows from Rademacher's theorem, \cite[Section 3.1.6]{Federer1969},
that $D\Lmap$ exists $\lusb^{n}$-almost everywhere in $U$. This
does not limit the validity of Equation (\ref{eq:pullback_flat_form}),
as a flat form is defined only $\lusb^{n}$-almost everywhere.

Consider a locally Lipschitz map $\Lmap:\oset\too V$ from an open
set $\oset\subset\reals^{n}$ to an open set $V\subset\reals^{m}$.
For a flat $n$-cochain $X$ in $V$ and a current $T_{B}$ induced
by an $L^{n}$-summable set $B$ in $U$, one has  
\begin{equation}
\begin{split}\Lmap^{\#}\left(X\right)(T_{B}) & =\int_{B}\Lmap^{\#}D_{X}dL^{n},\\
 & =\int_{B}D_{X}\left(\Lmap(x)\right)\left(D\Lmap(x)(e_{1})\wedge\dots\wedge D\Lmap(x)(e_{n})\right)d\lusb_{x}^{n},\\
 & =\int_{B}D_{X}\left(\Lmap(x)\right)\left(e_{1}\wedge\dots\wedge e_{n}\right)J_{\Lmap}(x)d\lusb_{x}^{n},\\
 & =\int_{\Lmap\left\{ B\right\} }\sum_{x\in\Lmap^{-1}\left(y\right)}D_{X}\left(y\right)dH_{y}^{n}.
\end{split}
\end{equation}
In the last equation the area formula for Lipschitz maps \cite[Section 2.1.2]{Giaquinta1998}
was applied and $J_{\Lmap}(x)$ is the Jacobian determinant of $\Lmap$
at $x$. In case $\Lmap:U\to V$ is injective with $U\subset\reals^{n}$
and $V\subset\reals^{n}$, we have 
\begin{equation}
\Lmap^{\#}\left(X\right)(T_{B})=X\left(\Lmap_{\#}T_{B}\right)=\int_{\Lmap\left\{ B\right\} }D_{X}\left(y\right)d\lusb_{y}^{n}=X\left(T_{\Lmap\left\{ B\right\} }\right),
\end{equation}
thus, $\Lmap_{\#}T_{B}=T_{\Lmap\left\{ B\right\} }$. In particular,
for a body $\part$, we note that 
\begin{equation}
\Lmap_{\#}T_{\part}=T_{\Lmap\left\{ \part\right\} }.\label{eq:conf_T_P=00003DT_conf_P}
\end{equation}
For the material surface $T_{\bnd P}$, Equation (\ref{eq:bndF=00003DFbnd})
gives 
\[
\Lmap_{\#}(T_{\bnd\part})=\Lmap_{\#}(\bnd T_{\part})=\bnd\Lmap_{\#}(T_{\part})=\bnd T_{\Lmap\left\{ \part\right\} },
\]
and for a material surface $T_{\surface}$ Equation (\ref{eq:T_S=00003Dbnd_T_P_res_S})
implies that 
\begin{equation}
\Lmap_{\#}\left(T_{\surface}\right)=T_{\Lmap\left\{ \surface\right\} }.\label{eq:F(T_S)=00003DT_F(S)}
\end{equation}

\section{The multiplication of sharp functions and flat chains\label{sec:Sharp_functions}}

A real valued field over a body $\part$ will be represented below
by the product of the current $T_{\part}$ and a \emph{sharp function}---a
real valued locally Lipschitz mapping. (The terminology is due to
Whitney \cite[Section V.4]{Whitney1957}.) The space of sharp functions
will be denoted by $\SS{\oset}$. 

A sharp function $\Smap\in\SS{\oset}$ defines a flat $0$-cochain
$\alpha_{\Smap}$ on $\oset$ as follows. Let $\xi$ be an $L^{n}\rest U$-measurable
function compactly supported in $\oset$. Then, $L^{n}\wedge\xi$
is a $0$-current of finite mass in $\oset$ as defined in Equation
(\ref{eq:L^n_0_current}). We set 
\begin{eqnarray}
\alpha_{\Smap}(L^{n}\wedge\xi) & = & \int_{\oset}\phi(x)\left(\xi(x)\right)dL^{n}.
\end{eqnarray}
For a compactly supported $L^{n}\rest U$ measurable $1$-vector field
$\eta$, $L^{n}\wedge\eta$ is a $1$-current of finite mass in $\oset$
defined in Equation (\ref{eq:L^n_m_current}). Using the existence
of the weak exterior derivative $\wcbd\form$, $\lusb^{n}\rest\oset$-almost
everywhere, we set 
\begin{equation}
\alpha_{\Smap}\left(\bnd\left(L^{n}\wedge\eta\right)\right)=\int_{\oset}\wcbd\phi\left(\eta(x)\right)dL^{n},
\end{equation}
 and obtain expressions analogous to Wolfe's representation theorem
(Equation (\ref{eq:Wolfe's_representation})). Let $\chain\in F_{0}(\oset)$
be a flat $0$-chain in $\oset$. Applying Theorem \ref{thm:Federer_representation_flat_chains},
$\chain$ may be expressed as $\chain=L^{n}\wedge\xi+\bnd\left(L^{n}\wedge\eta\right)$
with $\xi$ and $\eta$ as defined above. Set
\begin{equation}
\alpha_{\phi}(\chain)=\alpha_{\Smap}\left(L^{n}\wedge\xi+\bnd\left(L^{n}\wedge\eta\right)\right),
\end{equation}
so that $\alpha_{\phi}$ defines a continuous, linear function of
flat $0$-chains. Applying Equation (\ref{eq:F_norm_cochain}) we
obtain 
\begin{equation}
F\left(\alpha_{\Smap}\right)=\sup_{x\in\oset}\left\{ \mass{\Smap(x)},\mass{\wcbd\Smap(x)}\right\} .
\end{equation}

For $\chain\in F_{r}(\oset)$ and $\Smap\in\SS{\oset}$, define the
multiplication $\Smap\chain$ by $\Smap\chain=\alpha_{\Smap}\irest\chain$
using the interior product as defined in Equation (\ref{eq:interior_product_chain}).
That is, 
\begin{equation}
\Smap\chain(\omega)=\left(\alpha_{\Smap}\irest\chain\right)(\omega)=(\alpha_{\Smap}\wedge\omega)(\chain),\quad\text{for all}\quad\omega\in\D^{r}(\oset),\label{eq:sharp_times_flat_chain}
\end{equation}
where $\alpha_{\Smap}\wedge\omega$ is the flat $r$-cochain represented
by the flat $r$-form $\Smap\wedge\omega$. Note that by Equation
(\ref{eq:sharp_times_flat_chain}) 
\begin{equation}
\spt\left(\Smap\chain\right)\subset\spt\left(\Smap\right)\cap\spt\left(\chain\right).\label{eq:spt_sharp_chain}
\end{equation}
For the boundary of $\Smap\chain$ we first note that 
\begin{equation}
\bnd\left(\Smap\chain\right)\left(\omega\right)=\Smap\chain\left(\cbnd\omega\right)=\left(\alpha_{\form}\wedge\cbnd\omega\right)\chain,\quad\text{for all}\quad\omega\in\D^{r-1}(\oset).
\end{equation}
By Equation (\ref{eq:coboundary_of_wedge}) 
\begin{equation}
\cbnd\left(\alpha_{\form}\wedge\omega\right)=\left(\cbnd\alpha_{\form}\right)\wedge\omega+\alpha_{\form}\wedge\cbnd\omega,
\end{equation}
so that
\begin{equation}
\begin{split}\bnd\left(\Smap\chain\right)\left(\omega\right) & =\left(\cbnd\left(\alpha_{\form}\wedge\omega\right)-\left(\cbnd\alpha_{\form}\right)\wedge\omega\right)\chain,\\
 & =\left(\Smap\bnd\chain-\cbnd\alpha_{\form}\irest\chain\right)\left(\omega\right).
\end{split}
\end{equation}
Hence we can write 
\begin{equation}
\bnd\left(\Smap\chain\right)=\Smap\bnd\chain-\cbnd\alpha_{\form}\irest\chain.\label{eq:bnd_sharp_times_chain-1}
\end{equation}

\begin{rem}
The multiplication of sharp functions and chains was originally defined
in \cite[Section VII.1]{Whitney1957} using the notion of continuous
chains which are $r$-vector field approximations of $r$-chains.

\end{rem}

\begin{prop}
\label{prop:sharp_times_normal_and_flat}Given a sharp function $\phi$,
for $\chain\in N_{r,\compact}(\oset)$

\begin{equation}
N_{r,K}\left(\Smap\chain\right)\leq\left(\sup_{x\in\compact}\mass{\Smap(x)}+r\Lip_{\Smap,\compact}\right)N_{r,K}\left(\chain\right),\label{eq:Normal_sharp_norm}
\end{equation}
and for $\chain\in F_{r,\compact}(\oset)$ (see \cite[p.~208]{Whitney1957})

\begin{equation}
F_{r,K}\left(\Smap\chain\right)\leq\left(\sup_{x\in\compact}\mass{\Smap(x)}+\left(r+1\right)\Lip_{\Smap,K}\right)F_{r,K}\left(\chain\right).\label{eq:Flat_sharp_norm}
\end{equation}

\end{prop}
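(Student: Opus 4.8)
The plan is to reduce both inequalities to two mass estimates: one for the multiplication of a current by the bounded scalar $\Smap$, and one for the interior product of a current with the flat $1$-cochain $\cbnd\alpha_{\Smap}$. Recall that $\alpha_{\Smap}$ is the flat $0$-cochain represented by the $0$-form $\Smap$, so that, by the discussion of the coboundary following (\ref{eq:coboundary_cochain}), $\cbnd\alpha_{\Smap}$ is the flat $1$-cochain represented by the flat $1$-form $\wcbd\Smap$; by Rademacher's theorem $\wcbd\Smap$ agrees $\lusb^{n}$-almost everywhere with the pointwise derivative, so that $\|\wcbd\Smap(x)\|=\mass{\wcbd\Smap(x)}\le\Lip_{\Smap,\compact}$ for $\lusb^{n}$-almost every $x\in\compact$. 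The algebraic backbone will be the identity $\Smap\chain=\alpha_{\Smap}\irest\chain$ from (\ref{eq:sharp_times_flat_chain}) together with the boundary formula (\ref{eq:bnd_sharp_times_chain-1}), namely $\bnd(\Smap\chain)=\Smap\bnd\chain-\cbnd\alpha_{\Smap}\irest\chain$.

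First I would record two mass lemmas, valid for an arbitrary flat $k$-chain $\chain$ with $\Fmass{\chain}<\infty$ and $\spt(\chain)\subset\compact$ (recall that every current of finite mass and compact support is a flat chain). For (a): since $\alpha_{\Smap}\wedge\omega$ is represented by the continuous form $\Smap\,\omega$, one has $\Smap\chain(\omega)=(\alpha_{\Smap}\wedge\omega)(\chain)=\chain(\Smap\,\omega)$, the value of $\chain$ (extended by finite mass to continuous forms) on $\Smap\,\omega$; since $\spt(\chain)\subset\compact$ this gives $\Fmass{\Smap\chain}\le(\sup_{x\in\compact}\mass{\Smap(x)})\,\Fmass{\chain}$. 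For (b): for $\omega\in\D^{k-1}(\oset)$, (\ref{eq:interior_product_chain}) gives $(\cbnd\alpha_{\Smap}\irest\chain)(\omega)=(\cbnd\alpha_{\Smap}\wedge\omega)(\chain)$, where $\cbnd\alpha_{\Smap}\wedge\omega$ is represented by the flat $k$-form $\wcbd\Smap\wedge\omega$; expanding $\wcbd\Smap(x)\wedge\omega(x)$ on an orthonormal simple $k$-vector yields the pointwise comass bound $\|\wcbd\Smap(x)\wedge\omega(x)\|\le k\,\mass{\wcbd\Smap(x)}\,\|\omega(x)\|$, and then representing $\chain$ by integration as in (\ref{eq:current_by_integration}), together with the Rademacher bound above, yields $\Fmass{\cbnd\alpha_{\Smap}\irest\chain}\le k\,\Lip_{\Smap,\compact}\,\Fmass{\chain}$. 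Here $k$ is exactly the number of factors in a simple $k$-vector, i.e.\ the dimension of $\chain$.

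For the normal estimate, let $\chain\in N_{r,\compact}(\oset)$, so that $\chain$ and $\bnd\chain$ have finite mass and support in $\compact$. Lemma (a) gives $\Fmass{\Smap\chain}\le(\sup_{\compact}\mass{\Smap})\,\Fmass{\chain}$, while (\ref{eq:bnd_sharp_times_chain-1}), lemma (a) applied to the $(r-1)$-current $\bnd\chain$, and lemma (b) with $k=r$ applied to $\chain$, give $\Fmass{\bnd(\Smap\chain)}\le(\sup_{\compact}\mass{\Smap})\,\Fmass{\bnd\chain}+r\,\Lip_{\Smap,\compact}\,\Fmass{\chain}$. Adding these and using $\Fmass{\chain}\le N_{r,\compact}(\chain)$ and the definition (\ref{eq:N_norm}) of the normal norm yields (\ref{eq:Normal_sharp_norm}) (and, in passing, $\Smap\chain\in N_{r,\compact}(\oset)$). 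For the flat estimate, let $\chain\in F_{r,\compact}(\oset)$ and use (\ref{eq:T=00003DR+bndS}) to write $\chain=R+\bnd S$ with $R\in\D_{r}(\oset)$, $S\in\D_{r+1}(\oset)$, $\spt(R)\cup\spt(S)\subset\compact$ and $\Fmass{R}+\Fmass{S}=F_{\compact}(\chain)$; in particular $R$ and $S$ have finite mass. Using the definition (\ref{eq:coboundary_cochain}) of the coboundary, the Leibniz rule (\ref{eq:coboundary_of_wedge}), and (\ref{eq:interior_product_chain}), a direct computation on test forms gives $\Smap\chain-\bnd(\Smap S)=\Smap R+\cbnd\alpha_{\Smap}\irest S$, and $\Smap S$ has finite mass and support in $\compact$. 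Evaluating the infimum in (\ref{eq:Flat_norm_Whitney}) at the current $\Smap S$ then bounds $F_{\compact}(\Smap\chain)$ by $\Fmass{\Smap R}+\Fmass{\cbnd\alpha_{\Smap}\irest S}+\Fmass{\Smap S}$, and applying lemma (a) to $\Smap R$ and $\Smap S$ and lemma (b) with $k=r+1$ to the $(r+1)$-current $S$ bounds this by $(\sup_{\compact}\mass{\Smap}+(r+1)\,\Lip_{\Smap,\compact})(\Fmass{R}+\Fmass{S})$, which equals the right-hand side of (\ref{eq:Flat_sharp_norm}).

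The hard part will be establishing the pointwise comass inequality $\|\wcbd\Smap(x)\wedge\omega(x)\|\le k\,\mass{\wcbd\Smap(x)}\,\|\omega(x)\|$ and, relatedly, keeping the comass norm on forms consistent with the mass norm on currents when the flat $k$-cochain $\cbnd\alpha_{\Smap}\wedge\omega$ (whose representing form is defined only almost everywhere) is paired with the finite-mass current $\chain$ through (\ref{eq:current_by_integration}). Once these points are in hand, the integers $r$ and $r+1$ enter only as the number of factors in the simple $r$- and $(r+1)$-vectors against which the relevant forms are tested, and the remainder is formal assembly of the identities (\ref{eq:bnd_sharp_times_chain-1}), (\ref{eq:T=00003DR+bndS}), and (\ref{eq:Flat_norm_Whitney}).
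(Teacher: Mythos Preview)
Your proof is correct and follows essentially the same approach as the paper: both arguments isolate the two mass lemmas (a) $\Fmass{\Smap\chain}\le(\sup_{\compact}\mass{\Smap})\Fmass{\chain}$ and (b) $\Fmass{\cbnd\alpha_{\Smap}\irest\chain}\le k\,\Lip_{\Smap,\compact}\Fmass{\chain}$, invoke the Leibniz identity (\ref{eq:bnd_sharp_times_chain-1}) for the normal estimate, and decompose $\chain=R+\bnd S$ via (\ref{eq:T=00003DR+bndS}) for the flat estimate. The comass bound you flag as the ``hard part'' is precisely the special case $\binom{k}{1}=k$ of the general inequality $\Fmass{\omega\wedge\omega'}\le\binom{l+k}{k}\Fmass{\omega}\Fmass{\omega'}$, which the paper quotes from Federer--Fleming; your direct use of the infimum in (\ref{eq:Flat_norm_Whitney}) at the test current $\Smap S$ is slightly cleaner than the paper's chain of triangle inequalities through $F_{\compact}(\bnd(\Smap S))\le F_{\compact}(\Smap S)$, but both routes arrive at the identical bound $\Fmass{\Smap R}+\Fmass{\cbnd\alpha_{\Smap}\irest S}+\Fmass{\Smap S}$.
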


\begin{proof}
For $\chain\in N_{r,\compact}(\oset)$ we have 
\begin{equation}
\begin{split}\Fmass{\Smap\chain} & =\sup_{\omega\in\D^{r}(\oset)}\frac{\mass{\Smap\chain(\omega)}}{\Fmass{\omega}},\\
 & =\sup_{\omega\in\D^{r}(\oset)}\frac{\mass{\left(\alpha_{\Smap}\wedge\omega\right)\left(\chain\right)}}{\Fmass{\omega}},\\
 & =\sup_{\omega\in\D^{r}(\oset)}\frac{\mass{\int_{\oset}\left(\Smap(x)\omega(x)\right)\left(\vec{T}_{\chain}(x)\right)d\measure{\chain}}}{\Fmass{\omega}},\\
 & \leq\sup_{\omega\in\D^{r}(\oset)}\frac{\sup_{x\in\compact}\|\left(\Smap(x)\omega(x)\right)\|\Fmass{\chain}}{\Fmass{\omega}},\\
 & \le\sup_{x\in\compact}\mass{\Smap(x)}\Fmass{\chain},
\end{split}
\label{eq:Mass_chain_times_sharp_function}
\end{equation}
where in the third line we used the representation by integration
of $\chain$ and in the fourth line the term $\sup_{x\in\compact}\mass{\Smap(x)}$
was extracted since $\spt(\chain)\subset\compact$.

In order to examine the term $\Fmass{\bnd\left(\Smap\chain\right)}$,
we first apply Equation (\ref{eq:bnd_sharp_times_chain-1}) 
\begin{equation}
\Fmass{\bnd\left(\Smap\chain\right)}\leq\Fmass{\form\bnd\chain}+\Fmass{\cbnd\alpha_{\form}\irest\chain}.
\end{equation}
For the first term on the right-hand side we have, 
\begin{equation}
\Fmass{\form\bnd\chain}=\sup_{\omega\in\D^{r-1}(\oset)}\frac{\mass{\alpha_{\Smap}\wedge\omega(\bnd\chain)}}{\Fmass{\omega}}\leq\left(\sup_{x\in\compact}\mass{\Smap(x)}\right)\Fmass{\chain}.
\end{equation}
For the second term,
\begin{equation}
\begin{split}\Fmass{\cbnd\alpha_{\form}\irest\chain} & =\sup_{\omega\in\D^{r-1}(\oset)}\frac{\mass{\int_{\oset}\cbnd\alpha_{\Smap}\wedge\omega\left(\vec{T}_{\chain}\right)d\measure{\chain}}}{\Fmass{\omega}},\\
 & \leq\sup_{\omega\in\D^{r-1}(\oset)}\frac{\sup_{x\in\compact}\|\wcbd\Smap(x)\wedge\omega(x)\|\Fmass{\chain}}{\Fmass{\omega}},\\
 & \leq\sup_{\omega\in\D^{r-1}(\oset)}\left(\begin{array}{c}
r\\
1
\end{array}\right)\frac{\sup_{x\in\compact}\mass{\wcbd\Smap(x)}\Fmass{\omega}\Fmass{\chain}}{\Fmass{\omega}},\\
 & =r\left(\sup_{x\in\compact}\mass{\wcbd\Smap(x)}\right)\Fmass{\chain},
\end{split}
\end{equation}
 where in the third line we used the fact that for an $l$-form
$\omega$ and a $k$-form $\omega'$ 
\begin{equation}
\Fmass{\omega\wedge\omega'}\leq\left(\begin{array}{c}
l+k\\
k
\end{array}\right)\Fmass{\omega}\Fmass{\omega'},\label{eq:Mass_for_wedge_forms}
\end{equation}
as is shown in \cite{Federer1960}.

One concludes that 
\begin{equation}
\begin{split}\Fnormal{\Smap\chain} & =\Fmass{\Smap\chain}+\Fmass{\bnd\left(\Smap\chain\right)},\\
 & \leq\sup_{x\in\compact}\mass{\Smap(x)}\Fmass{\chain}+\sup_{x\in\compact}\mass{\Smap(x)}\Fmass{\bnd\chain}+r\Lip_{\Smap,\compact}\Fmass{\chain},\\
 & \leq\left(\sup_{x\in\compact}\mass{\Smap(x)}+r\Lip_{\Smap,\compact}\right)\Fnormal{\chain}.
\end{split}
\label{eq:Normal_chain_times_sharp_function}
\end{equation}

For a flat $r$-chain $\chain\in F_{r,\compact}(\oset)$ we use the
representation given in Equation (\ref{eq:T=00003DR+bndS}) by $\chain=R+\bnd S$
so that $F_{\compact}(\chain)=\Fmass R+\Fmass S.$ 

We first observe that 

\begin{equation}
\begin{split}\Fmass{\cbnd\alpha_{\Smap}\irest S} & =\sup_{\omega\in\D^{r}(\oset)}\frac{\cbnd\alpha_{\Smap}\irest S(\omega)}{\Fmass{\omega}},\\
 & =\sup_{\omega\in\D^{r}(\oset),\,\spt(\omega)\subset\compact}\frac{\left(\cbnd\alpha_{\Smap}\wedge\omega\right)(S)}{\Fmass{\omega}},\\
 & \leq\frac{\Fmass S\Fmass{\cbnd\alpha_{\Smap}\wedge\omega}}{\Fmass{\omega}},\\
 & \leq\frac{\Fmass S}{\Fmass{\omega}}\left(\begin{array}{c}
r+1\\
r
\end{array}\right)\Fmass{\omega}\sup_{x\in\compact}\mass{\wcbd\Smap(x)},
\end{split}
\end{equation}
and conclude that 
\begin{equation}
\Fmass{\cbnd\alpha_{\Smap}\irest S}\leq\left(r+1\right)\Lip_{\Smap,\compact}\Fmass S.\label{eq:inequality_a}
\end{equation}

Estimating $F_{K}(\phi\chain)$, one has

\begin{equation}
\begin{split}F_{\compact}(\Smap\chain) & =F_{\compact}\left(\Smap R+\Smap\bnd S\right),\\
 & \leq F_{\compact}\left(\Smap R\right)+F_{\compact}\left(\Smap\bnd S\right),\\
 & \leq F_{\compact}\left(\Smap R\right)+F_{\compact}\left(\cbnd\alpha_{\Smap}\irest S+\bnd(\Smap S)\right),\\
 & \leq F_{\compact}\left(\Smap R\right)+F_{\compact}\left(\cbnd\alpha_{\Smap}\irest S\right)+F_{\compact}\left(\bnd(\Smap S)\right),\\
 & \leq F_{\compact}\left(\Smap R\right)+F_{\compact}\left(\cbnd\alpha_{\Smap}\irest S\right)+F_{\compact}\left((\Smap S)\right),\\
 & \leq\Fmass{\Smap R}+\Fmass{\cbnd\alpha_{\Smap}\irest S}+\Fmass{\Smap S},\\
 & \leq\sup_{x\in\compact}\mass{\Smap(x)}\Fmass R+(r+1)\Lip_{\Smap,\compact}\Fmass S+\sup_{x\in\compact}\mass{\Smap(x)}\Fmass S,\\
 & \leq\left\{ \sup_{x\in\compact}\mass{\Smap(x)}+(r+1)\Lip_{\Smap,\compact}\right\} \left(\Fmass R+\Fmass S\right),\\
 & =\left\{ \sup_{x\in\compact}\mass{\Smap(x)}+(r+1)\Lip_{\Smap,\compact}\right\} \Fflat{\chain},
\end{split}
\label{eq:Flat_chain_times_sharp_function}
\end{equation}
where in the third line we used Equation (\ref{eq:bnd_sharp_times_chain-1}),
in the sixth line we used Equation (\ref{eq:F<M}), and in the seventh
line we used Equation (\ref{eq:inequality_a}). 
\end{proof}
The vector space of sharp functions defined on $\oset$ and valued
in $\reals^{m}$ is identified as the space of $m$-tuples of real
valued sharp functions defined on $\oset$ i.e. $\SS{\oset,\reals^{m}}=\left[\SS{\oset}\right]^{m}$.
For $\Smap\in\SS{\oset,\reals^{m}}$ and $\chain\in F_{r,\compact}(\oset)$
the flat $r$-chain $\Smap\chain$ is viewed as an element of the
vector space of $\left(F_{r,\compact}(\oset)\right)^{m}$, \textit{i.e.},
an $m$-tuple of flat $r$-chains in $\oset$ with $\left(\Smap\chain\right)_{i}=\Smap_{i}\chain$.

\section{Configuration space and virtual velocities \label{sec:configuration-space-and}}

Traditionally, a configuration of a body $\part$ is viewed as a mapping
$\part\to\reals^{n}$ which preserves the basic properties assigned
to bodies and material surfaces. Guided by our initial definition
of a body $T_{\part}$ as a current induced by $\part$, a set of
finite perimeter in the open set $\body$, a configuration of the
body $\part$ is defined as a mapping $\conf_{\part}\in\Limb(\part,\reals^{n})$.
To distinguish it from a configuration of the universal body to be
considered below, such an element, $\conf_{\part}$, will be referred
to as a \textit{local configuration}. The choice of Lipschitz type
configurations is a generalization of the traditional choice of $C^{1}$-embeddings
usually taken in continuum mechanics.

It is natural therefore to refer to $\confs_{\part}=\Limb(\part,\reals^{n})$
as the \emph{configuration space of the body }$\part$. Since a body
is a compact set, it follows from Theorem \ref{thm:Lipschitz_embedding_open_set}
that $\confs_{\part}$ is an open subset of the Banach space $\Lip(\part,\reals^{n})\cong\Lip\left(\conf_{\part}\left\{ \part\right\} ,\reals^{n}\right)$. 

For $\part,\part'\in\Omega_{\body}$ the local configurations $\conf_{\part},\conf_{\part'}$
are said to be compatible if 
\begin{equation}
\conf_{\part}\mid_{\part\cap\part'}=\conf_{\part'}\mid_{\part\cap\part'}.
\end{equation}
 Note that the intersection of two sets of finite perimeter is a set
of finite perimeter, thus, the restricted map may be viewed as the
configuration of the body $\part\cap\part'$. 

A \textit{system of compatible configurations} $\gconf$, is a collection
of compatible local configurations $\conf=\left\{ \conf_{\part}\mid\part\in\Omega_{\body}\right\} $.
Clearly, a system of compatible configuration is represented by a
unique element of $\Limb\left(\body,\reals^{n}\right)$. An element
$\gconf\in\Limb\left(\body,\reals^{n}\right)$ will be referred to
as a \textit{global configuration,} and the \textit{global configuration
space} \textit{$\confs$} is the collection of all global configurations,
\textit{i.e.,}
\begin{equation}
\confs=\Limb\left(\body,\reals^{n}\right).
\end{equation}
We will view the configuration space as a trivial infinite dimensional
differentiable manifold, specifically, a trivial manifold modeled
on a locally convex topological vector space as in \cite[Chapter 9]{Michor1980}.

It is noted, in particular, that a Lipschitz embedding is injective
and the image of a set of a finite perimeter in $\body$ is a set
of finite perimeter in $\reals^{n}$. In addition, as Section \ref{sec:On-Lipschitz-mappings}
indicates, Lipschitz mappings are the natural morphism in the category
of sets of finite perimeters and in the category of flat chains. Thus,
an element $\conf\in\confs$ preserves the structure of bodies and
material surfaces as required. That is, every $\conf\in\confs$ induces
a map $\conf_{\#}$ of flat chains. For any $T_{\part}\in\Omega_{\body}$,
the current $\conf_{\#}\left(T_{\part}\right)$ is an element of $N_{n}\left(\reals^{n}\right)$,
and for any $T_{\surface}\in\bnd\Omega_{\body}$, the current $\conf_{\#}\left(T_{\surface}\right)$
is an $(n-1)$-chain of finite mass in $\reals^{n}$. By Equations
(\ref{eq:conf_T_P=00003DT_conf_P}) and (\ref{eq:F(T_S)=00003DT_F(S)})
it follows that $\conf_{\#}\left(T_{\part}\right)=T_{\conf\left\{ \part\right\} }$
and $\conf_{\#}\left(T_{\surface}\right)=T_{\conf\left\{ \surface\right\} }$.
Applying Equation (\ref{eq:Lipschitz_normal_norm}), one obtains for
every $T_{\surface}\in\bnd\Omega_{\body}$ that 
\begin{equation}
\Fmass{\conf_{\#}\left(T_{\surface}\right)}\leq\Fmass{T_{\surface}}\left(\Lip_{\conf,\hat{\surface}}\right)^{n-1}.
\end{equation}
By Equation (\ref{eq:Lipschitz_mass_norm}), for every $T_{\part}\in\Omega_{\body}$,
\begin{eqnarray}
\Fnormal{\conf_{\#}\left(T_{\part}\right)} & \leq & N\left(T_{\part}\right)\sup\left\{ \left(\Lip_{\conf,\part}\right)^{n},\left(\Lip_{\conf,\part}\right)^{n-1}\right\} .
\end{eqnarray}

For a global configuration $\conf$, let $\conf\left(\Omega_{\body}\right)$
denote the collection of images of bodies under the configuration
$\conf$, \textit{i.e.},
\begin{equation}
\conf\left(\Omega_{\body}\right)=\left\{ \conf_{\#}\left(T_{\part}\right)\mid T_{\part}\in\Omega_{\body}\right\} .
\end{equation}
Similarly, the collection of surfaces at the configuration $\conf$
is 
\begin{equation}
\conf\left(\bnd\Omega_{\body}\right)=\left\{ \conf_{\#}\left(T_{\surface}\right)\mid T_{\surface}\in\bnd\Omega_{\body}\right\} .
\end{equation}

A\textit{ global virtual velocity at the configuration} $\conf$ is
identified with an element of the tangent space to $\confs$ at $\conf$.
By Theorem \ref{thm:Lipschitz_embedding_open_set}, $\Lip(\body,\reals^{n})$
is naturally isomorphic to any tangent space to $\confs$. Moreover,
$\conf$ induces an isomorphism $\Lip(\body,\reals^{n})\cong\Lip\left(\conf\left\{ \body\right\} ,\reals^{n}\right)$
and \textit{an Eulerian virtual velocity} is viewed as an element
of $\Lip\left(\conf\left\{ \body\right\} ,\reals^{n}\right)$. In
what follows, we refer to $\Lip\left(\conf\left\{ \body\right\} ,\reals^{n}\right)$
as the space of global virtual velocities at the configuration $\conf$
and use the abbreviated notation $\virvs$ for it. Naturally, an element
of $\virvs$ may be identified with an $n$-tuple of sharp functions
defined on $\conf\left\{ \body\right\} $, \textit{i.e.}, using the
Whitney topology on $\Lip(\conf\left\{ \body\right\} )$, $\virvs=\left[\Lip(\conf\left\{ \body\right\} )\right]^{n}$.

Focusing our attention to a particular body $\part$, one may make
use of the approach of \cite{Segev1986} and define a \emph{virtual
velocity of a body $\part$ at a configuration} $\conf_{\part}\in\confs_{\part}$
as an element $\virv_{\part}$ in the tangent space $T_{\conf_{\part}}\confs_{\part}$.
It follows from Theorem \ref{thm:Lipschitz_embedding_open_set} that
one may make the identifications $T_{\conf_{\part}}\confs_{\part}\cong\Lip(\part,\reals^{n})\cong\Lip\left(\conf_{\part}\left\{ \part\right\} ,\reals^{n}\right)$. 
\begin{thm}
\label{thm:RestrOfConf} For every body $\part$, and every $\conf_{\part}\in\confs_{\part},$
and every $\conf\in\confs$ such that $\conf\mid_{\part}=\conf_{\part},$
the restriction mapping 
\begin{equation}
\rho_{\part}:T_{\conf}\confs\longrightarrow T_{\conf_{\part}}\confs_{\part}
\end{equation}
is surjective.\end{thm}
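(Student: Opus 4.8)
The plan is to identify the two tangent spaces with full spaces of Lipschitz maps and to recognise $\rho_{\part}$ as an ordinary restriction operator, thereby reducing the statement to a Lipschitz extension problem. By Theorem~\ref{thm:Lipschitz_embedding_open_set} the configuration space $\confs=\Limb(\body,\reals^{n})$ is open in the topological vector space $\Lip(\body,\reals^{n})$, and for a (compact) body $\part$ the space $\confs_{\part}=\Limb(\part,\reals^{n})$ is open in the Banach space $\Lip(\part,\reals^{n})$ of Lipschitz maps $\part\to\reals^{n}$. Regarding both as trivial manifolds modelled on those vector spaces, one has the canonical identifications $T_{\conf}\confs\cong\Lip(\body,\reals^{n})$ and $T_{\conf_{\part}}\confs_{\part}\cong\Lip(\part,\reals^{n})$. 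The assignment $\conf'\mapsto\conf'\mid_{\part}$ is the restriction to the open set $\confs$ of the continuous linear restriction operator $r\colon\Lip(\body,\reals^{n})\to\Lip(\part,\reals^{n})$ (well defined since $\part\subset\body$ is compact), and it carries $\confs$ into $\confs_{\part}$; hence its tangent map at $\conf$, which is precisely $\rho_{\part}$, is $r$ itself. So surjectivity of $\rho_{\part}$ is equivalent to the assertion that every Lipschitz map $\virv_{\part}\colon\part\to\reals^{n}$ is the restriction to $\part$ of some locally Lipschitz map $\virv\colon\body\to\reals^{n}$.

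To construct such a $\virv$ I would work componentwise, using $\Lip(\part,\reals^{n})=\left[\Lip(\part)\right]^{n}$ and $\Lip(\body,\reals^{n})=\left[\Lip(\body)\right]^{n}$. Each component $\virv_{\part}^{i}$ is bounded and $L$-Lipschitz on $\part$ for a common finite constant $L$. Then McShane's extension formula
\begin{equation}
\bar{\virv}^{i}(x)=\inf_{p\in\part}\left\{\virv_{\part}^{i}(p)+L\,\mass{x-p}\right\},\qquad x\in\reals^{n},
\end{equation}
produces a bounded $L$-Lipschitz function $\bar{\virv}^{i}$ on all of $\reals^{n}$ (the infimum is finite because $\virv_{\part}^{i}$ is bounded on the compact set $\part$) which agrees with $\virv_{\part}^{i}$ on $\part$ thanks to the Lipschitz bound on $\virv_{\part}^{i}$. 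Setting $\virv=(\bar{\virv}^{1},\dots,\bar{\virv}^{n})\mid_{\body}$, this $\virv$ lies in $\Lip(\body,\reals^{n})$ — a globally Lipschitz map on $\reals^{n}$ restricts to a locally Lipschitz map on the open set $\body$ — and $\rho_{\part}(\virv)=\virv\mid_{\part}=\virv_{\part}$. This exhibits a preimage for every $\virv_{\part}$, proving surjectivity.

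The point that requires care is the conceptual one in the first paragraph: because $\confs$ and $\confs_{\part}$ are \emph{open} in their ambient vector spaces, their tangent spaces retain no constraint coming from the embedding conditions (injectivity, the bi-Lipschitz immersion bounds, or the homeomorphism property), so the extension $\virv$ need not itself be ``tangent to $\Limb$'' in any stronger sense, and surjectivity of the linear map $\rho_{\part}$ genuinely collapses to Lipschitz extension, with nothing further to verify against the fixed $\conf$ beyond the standing hypothesis $\conf\mid_{\part}=\conf_{\part}$. Granting this, the analytic input is only the elementary McShane extension; the compactness of $\part$ is used exactly once, to guarantee a single global Lipschitz constant $L$ for $\virv_{\part}$. (One could instead patch local Lipschitz extensions by a partition of unity subordinate to a cover of $\body$, but for this statement that refinement is unnecessary.)
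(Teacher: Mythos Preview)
Your proof is correct and follows essentially the same route as the paper: identify $T_{\conf}\confs\cong\Lip(\body,\reals^{n})$ and $T_{\conf_{\part}}\confs_{\part}\cong\Lip(\part,\reals^{n})$ via the openness of the embedding spaces, then reduce surjectivity of $\rho_{\part}$ to a Lipschitz extension problem. The only difference is the extension tool: the paper invokes Kirszbraun's theorem, which extends a Lipschitz map $A\to\reals^{m}$ to $\reals^{n}\to\reals^{m}$ with the same Lipschitz constant in one stroke, whereas you apply McShane's scalar formula componentwise. Your choice is more elementary and fully explicit, at the cost of losing the sharp vector-valued Lipschitz constant (each $\bar{\virv}^{i}$ is $L$-Lipschitz, so $\bar{\virv}$ is only $\sqrt{n}\,L$-Lipschitz); since the statement only requires \emph{some} Lipschitz extension, this is harmless here.
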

\begin{proof}
We recall that Kirszbraun's theorem asserts that a Lipschitz mapping
$f:A\to\reals^{m}$ defined on a set $A\subset\reals^{n}$ may be
extended to to a Lipschitz function $F:\reals^{n}\to\reals^{m}$ having
the same Lipschitz constant (see \cite[Section 2.10.43]{Federer1969}
or \cite[Section 6.2]{Heinonen2000}). It follows immediately that
any $\virv_{\part}\in\Lip(\part,\reals^{n})$ may be extended to an
element $\virv\in\Lip(\body,\reals^{n})$.
\end{proof}
Anticipating the properties of systems of forces to be considered
below, we wish to provide the collection of restrictions of global
virtual velocities to the various bodies with a finer structure than
that provided by the $\norm{\cdot}{\Lip,\compact}$-semi-norms. In
particular, when considering the restriction $\virv\mid_{\part}$
of a global virtual velocity $\virv$ to a body $\part$, we wish
that the magnitude of the resulting object will reflect the mass of
$\part$. The \textit{local virtual velocity for the body $T_{\part}$
at the configuration} $\conf$ \textit{induced by the global virtual
velocity} $\virv\in\virvs$ is defined as the $n$-tuple of normal
$n$-currents given by the products $\virv\conf_{\#}\left(T_{\part}\right)$
such that 
\begin{equation}
\left[\virv\conf_{\#}\left(T_{\part}\right)\right]_{i}=\virv_{i}\conf_{\#}\left(T_{\part}\right),\quad\text{for all}\quad i=1,\dots,n.\label{eq:n_tuple_of_currents}
\end{equation}
By Equations (\ref{eq:Mass_chain_times_sharp_function}) and (\ref{eq:Normal_sharp_norm}),
each component $\left[\virv\conf_{\#}\left(T_{\part}\right)\right]_{i}$
is a normal $n$-current such that
\begin{equation}
\begin{split}\Fmass{\left[\virv\conf_{\#}\left(T_{\part}\right)\right]_{i}} & \le\sup_{y\in\conf\left\{ \part\right\} }\mass{\virv_{i}(y)}\Fmass{\conf_{\#}\left(T_{\part}\right)},\\
 & \le\sup_{y\in\conf\left\{ \part\right\} }\mass{\virv_{i}(y)\left(\Lip_{\conf,\part}\right)^{n}}\Fmass{T_{\part}},
\end{split}
\label{eq:BoundOnMass_vTp}
\end{equation}
and 
\begin{equation}
\begin{split}N\left(\left[\virv\conf_{\#}\left(T_{\part}\right)\right]_{i}\right) & \leq\left(\left(\sup_{y\in\conf\left\{ \part\right\} }\mass{\virv_{i}(y)}\right)+n\Lip_{\virv_{i},\conf\left\{ \part\right\} }\right)N\left(\conf_{\#}\left(T_{\part}\right)\right),\\
 & \le\left(\left(\sup_{y\in\conf\left\{ \part\right\} }\mass{\virv_{i}(y)}\right)+n\Lip_{\virv_{i},\conf\left\{ \part\right\} }\right)\\
 & \qquad\qquad\times\sup\left\{ \left(\Lip_{\conf,\part}\right)^{n},\left(\Lip_{\conf,\part}\right)^{n-1}\right\} N(T_{\part}).
\end{split}
\end{equation}
In other words, the mapping $\virvs\times\Omega_{\body}\to\D_{m}\left(\body\right)$
given by $(v,T_{\part})\mapsto\virv\conf_{\#}\left(T_{\part}\right)$
is continuous with respect to both the mass norm and the normal norm. 

Similarly, the assignment of a virtual velocity $\virv\in\virvs$
to a material surface $T_{\surface}$ induces an $n$-tuple of $(n-1)$-chains
defined by the multiplication $\virv\conf_{\#}\left(T_{\surface}\right)$.
Each component $\left[\virv\conf_{\#}\left(T_{\surface}\right)\right]_{i}$
is a chain of finite mass and applying Equation (\ref{eq:Mass_chain_times_sharp_function}),
one obtains 
\begin{equation}
\begin{split}M\left(\left[\virv\conf_{\#}\left(T_{\surface}\right)\right]_{i}\right) & \leq\left(\sup_{y\in\conf\left\{ \hat{\surface}\right\} }\mass{\virv_{i}(y)}\right)M\left(\conf_{\#}\left(T_{\surface}\right)\right),\\
 & \leq\left(\sup_{y\in\conf\left\{ \hat{\surface}\right\} }\mass{\virv_{i}(y)}\right)\left(\Lip_{\conf,\hat{\surface}}\right)^{n-1}M\left(T_{\surface}\right).
\end{split}
\label{eq:BoundOnMass_vTs}
\end{equation}

\section{Cauchy fluxes\label{sec:Cauchy-fluxes}}

Alluding to the approach of \cite{Segev1986} again, a \emph{force
on a body $\part$ at the configuration $\conf_{\part}\in\confs_{\part}$
}is an element in the dual to the tangent space, $T_{\conf_{\part}}^{*}\confs_{\part}$.
In other words, forces on $\part$ are elements of the infinite dimensional
cotangent bundle $T^{*}\confs_{\part}$. For $\force_{\part}\in T_{\conf_{\part}}^{*}\confs_{\part}$,
and $\virv_{\part}\in T_{\conf_{\part}}\confs_{\part}$, the action
$\force_{\part}(\virv_{\part})$ is interpreted as the virtual power
performed by the force $\force_{\part}$ for the virtual velocity
$\virv_{\part}$. It follows immediately that a force on a body $\part$
at $\conf_{\part}$ may be identified with a linear continuous functional
on the space of Lipschitz mappings. Such functionals are quite irregular
and will not be considered here.

Instead, we use in this section the notion of a Cauchy flux at the
configuration $\conf$, as a real valued function operating on the
Cartesian product $\conf\left(\bnd\Omega_{\body}\right)\times W_{\conf}$.
These impose stricter conditions on the force system and resulting
stress fields. The conditions to be imposed still imply that for a
fixed body, a force is a continuous linear functional of the virtual
velocities of that body.

A Cauchy flux represents a system of surface forces operating on
the material surfaces, or more precisely, their images under $\conf$.
For a given surface and a given virtual velocity field, the value
returned by the Cauchy flux mapping is interpreted as the virtual
power (or virtual work) performed by the force acting on the image
of the material surface under $\conf$ for the given virtual velocity.
\begin{defn}
A \textit{Cauchy flux} at the configuration $\conf$ is a mapping
of the form 
\begin{equation}
\ssys_{\conf}:\conf\left(\bnd\Omega_{\body}\right)\times W_{\conf}\to\reals,\label{eq:Cauchy_flux_def}
\end{equation}
 such that the following hold.\end{defn}
\begin{description}
\item [{Additivity}] $\ssys_{\conf}\left(\cdot,\virv\right)$ is additive
for disjoint compatible material surfaces, \emph{i.e.}, for every
$\conf_{\#}\left(T_{\surface}\right),\conf_{\#}\left(T_{\surface'}\right)\in\conf\left(\bnd\Omega_{\body}\right)$
compatible and disjoint, 
\begin{equation}
\ssys_{\conf}\left(\conf_{\#}\left(T_{\surface\cup\surface'}\right),\virv\right)=\ssys_{\conf}\left(\conf_{\#}\left(T_{\surface}\right),\virv\right)+\ssys_{\conf}\left(\conf_{\#}\left(T_{\surface'}\right),\virv\right),\label{eq:Cauchy_flux_additivity}
\end{equation}
holds for every $\virv\in\virvs$.
\item [{Linearity}] $\ssys_{\conf}\left(\conf_{\#}\left(T_{\surface}\right),\cdot\right)$
is a linear function on $\virvs$, \emph{i.e.}, for all $\alpha,\,\beta\in\reals$
and $\virv,\,\virv'\in\virvs$, 
\begin{equation}
\ssys_{\conf}\left(\conf_{\#}\left(T_{\surface}\right),\alpha\virv+\beta\virv'\right)=\alpha\ssys_{\conf}\left(\conf_{\#}\left(T_{\surface}\right),\virv\right)+\beta\ssys_{\conf}\left(\conf_{\#}\left(T_{\surface}\right),\virv'\right)\label{eq:Cauchy_flux_linearity}
\end{equation}
holds for every $\conf_{\#}\left(T_{\surface}\right)\in\conf\left(\bnd\Omega_{\body}\right)$.
\end{description}
 Let $\virv\in\virvs$ and $\conf_{\#}\left(T_{\surface}\right)\in\conf\left(\bnd\Omega_{\body}\right)$,
then, by the linearity of the Cauchy flux, 
\begin{equation}
\Phi_{\conf}\left(\conf_{\#}\left(T_{\surface}\right),\virv\right)=\Phi_{\conf}\left(\conf_{\#}\left(T_{\surface}\right),\sum_{i=1}^{n}\virv_{i}e_{i}\right)=\sum_{i=1}^{n}\Phi_{\conf}\left(\conf_{\#}\left(T_{\surface}\right),\virv_{i}e_{i}\right).
\end{equation}
Set $\Phi_{\conf}^{i}\left(\conf_{\#}\left(T_{\surface}\right),u\right)=\Phi_{\conf}\left(\conf_{\#}\left(T_{\surface}\right),ue_{i}\right)$
for all $u\in\Lip(\conf\left\{ \body\right\} )$, so that $\Phi_{\conf}^{i}$
is naturally viewed as the $i$-th component of the Cauchy flux at
the configuration $\conf$. One has, 
\begin{equation}
\Phi_{\conf}\left(\conf_{\#}\left(T_{\surface}\right),\virv\right)=\sum_{i=1}^{n}\Phi_{\conf}^{i}\left(\conf_{\#}\left(T_{\surface}\right),\virv_{i}\right).\label{eq:Cauchy_flux_components-1}
\end{equation}

\begin{description}
\item [{Balance}] There is a number $0<s<\infty$ such that for all components
of the Cauchy flux 
\begin{equation}
\ssys_{\conf}^{i}\left(\conf_{\#}\left(T_{\surface}\right),\virv\right)\leq s\norm{\virv}{\Lip,\hat{\surface}}\Fmass{\conf\left(T_{\surface}\right)},\label{eq:Cauchy_flux_balanced}
\end{equation}
for all $\conf_{\#}\left(T_{\surface}\right)\in\conf\left(\bnd\Omega_{\body}\right)$
and $\virv\in\virvs$.
\item [{Weak~balance}] There is a number $0<b<\infty$ such that for all
components of the Cauchy flux 
\begin{equation}
\ssys_{\conf}^{i}\left(\conf_{\#}\left(\bnd T_{\part}\right),\virv\right)\leq b\norm{\virv}{\Lip,\part}\Fmass{\conf_{\#}\left(T_{\part}\right)},\label{eq:Cauchy_flux_weakly_balanced}
\end{equation}
for all $\conf_{\#}\left(T_{\part}\right)\in\conf\left(\Omega_{\body}\right)$
and $\virv\in\virvs$.
\end{description}

It is observed that from the balance property assumed above, for each
material surface $T_{\surface}$, $\ssys_{\conf}\left(\conf_{\#}\left(T_{\surface}\right),\cdot\right)$
is continuous. 
\begin{thm}
\label{thm:FluxesGiveCochains}Each component of the Cauchy flux $\Phi_{\conf}$
induces a unique flat $\left(n-1\right)$-cochain in $\conf\left\{ \body\right\} $.\label{thm:Cauchy_flux_flat_cochain}\end{thm}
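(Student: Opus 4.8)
The plan is to construct, for each fixed component index $i$, a flat $(n-1)$-cochain directly, by defining it on polyhedral chains from the given flux, establishing the flat-norm bound there, and then completing. Since $\conf$ is bi-Lipschitz, $\conf_{\#}$ sends sets of finite perimeter of $\body$ to sets of finite perimeter of the open set $\oset:=\conf\left\{ \body\right\} \subset\reals^{n}$ and, by (\ref{eq:F(T_S)=00003DT_F(S)}), sends the material surfaces of $\body$ onto those of $\oset$; so it suffices to build a flat $(n-1)$-cochain on $F_{n-1}(\oset)$. I would evaluate the flux on the constant unit sharp function, for which $\norm 1{\Lip,\hat{\surface}}=1$, and set $\psi(T_{\surface}):=\ssys_{\conf}^{i}(\conf_{\#}(T_{\surface}),1)$. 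Replacing $\virv$ by $-\virv$ in (\ref{eq:Cauchy_flux_balanced}) and (\ref{eq:Cauchy_flux_weakly_balanced}) and using linearity in $\virv$, the balance and weak-balance postulates become the two-sided bounds $|\psi(T_{\surface})|\le s\,\Fmass{T_{\surface}}$ for every material surface $T_{\surface}$ of $\oset$ and $|\psi(\bnd T_{\part})|\le b\,\Fmass{T_{\part}}=b\,\lusb^{n}(\part)$ for every body $\part$ of $\oset$, while (\ref{eq:Cauchy_flux_additivity}) gives additivity of $\psi$ on disjoint compatible surfaces.

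First I would extend $\psi$ linearly to the real span of the material surfaces of $\oset$, observing that this span contains every polyhedral $(n-1)$-chain: an oriented $(n-1)$-simplex $\sigma\subset\oset$ is, up to sign, the restriction $\bnd T_{\Delta}\rest\sigma$ of the boundary of a thin $n$-simplex $\Delta\subset\oset$ having $\sigma$ as a facet, hence a signed material surface, and likewise every oriented $n$-simplex is a body whose full boundary is a material surface. The only genuine point here is that the linear extension be well defined, independent of the simplicial representative of a polyhedral chain; this I would get from additivity on disjoint compatible pieces — applied to thin prismatic bodies over a convex cell it yields the orientation-reversal identity $\psi(-T_{\surface})=-\psi(T_{\surface})$ — together with the mass-continuity of $\psi$ implied by the first bound, which promotes additivity over strictly disjoint pieces to additivity over pieces that are only $\lusb^{n-1}$-essentially disjoint, enough to reconcile two subdivisions through a common refinement.

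The heart of the argument is the estimate $|\psi(P)|\le c\,F_{\compact}(P)$ for a polyhedral $(n-1)$-chain $P$ with $\spt(P)\subset\compact$, with $c$ independent of $\compact$. Using that $F_{\compact}(P)$ is the infimum of $\Fmass{P-\bnd Q}+\Fmass Q$ over polyhedral $n$-chains $Q$ supported in $\compact$ (the polyhedral analogue of (\ref{eq:Flat_norm_Whitney}), (\ref{eq:T=00003DR+bndS})), for each $\epsilon>0$ I would pick such a $Q$ with $\Fmass{P-\bnd Q}+\Fmass Q\le F_{\compact}(P)+\epsilon$ and set $P':=P-\bnd Q$. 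After subdividing so that the top cells of $P'$ and the top cells $\Delta_{k}$ of $Q$ are pairwise non-overlapping, $\Fmass{P'}$ and $\Fmass Q$ are exactly the sums of the cell masses, $\bnd Q$ is an explicit linear combination of the $\bnd T_{\Delta_{k}}$, and — crucially — $P'$ and $Q$ lie in the span of material surfaces with no limiting step required. The first bound applied cell by cell to $P'$ gives $|\psi(P')|\le s\,\Fmass{P'}$, and the second applied to each $\bnd T_{\Delta_{k}}$ gives $|\psi(\bnd Q)|\le b\,\Fmass Q$, so $|\psi(P)|\le s\,\Fmass{P'}+b\,\Fmass Q\le c\bigl(F_{\compact}(P)+\epsilon\bigr)$ with $c:=\max\{s,b\}$, and $\epsilon\to0$ gives the claim.

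Finally, since the polyhedral $(n-1)$-chains supported in $\compact$ are $F_{\compact}$-dense in $F_{n-1,\compact}(\oset)$ and $c$ is independent of $\compact$, $\psi$ extends uniquely by flat-norm continuity to a linear functional $\cochain_{i}$ on $F_{n-1}(\oset)=\bigcup_{\compact}F_{n-1,\compact}(\oset)$ with $\cochain_{i}(T)\le c\,F_{\compact}(T)$ for every $\compact$ and $T\in F_{n-1,\compact}(\oset)$ — that is, a flat $(n-1)$-cochain, restricting on polyhedral chains to the functional built from $\ssys_{\conf}^{i}(\conf_{\#}(\cdot),1)$. Uniqueness is then immediate, since any flat $(n-1)$-cochain agreeing with $\ssys_{\conf}^{i}(\conf_{\#}(\cdot),1)$ on the material surfaces agrees with $\cochain_{i}$ on the flat-dense polyhedral chains, hence everywhere; and Wolfe's theorem (\ref{eq:Wolfe's_representation}) then represents $\cochain_{i}$ by a flat $(n-1)$-form $\fform{\cochain_{i}}$, the $i$-th component of the stress. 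I expect the main obstacle to be the estimate of the third paragraph and the reductions feeding it: neither postulate alone controls the flat norm — balance bounds only the area part $P'$, weak balance only the boundary part $\bnd Q$ — so one is forced to split a chain exactly along the interface between the two postulates' domains and to realize the two halves as finite signed combinations of material surfaces and of boundaries of bodies, respectively; the orientation-and-refinement bookkeeping of the second paragraph, reconciled through the mass-continuity hidden in the balance postulate, is the fussiest part, while promoting the identification $\cochain_{i}(T_{\surface})=\ssys_{\conf}^{i}(\conf_{\#}(T_{\surface}),1)$ from polyhedral chains to all material surfaces — rather than to almost every one in Silhavy's sense — is a further delicate point.
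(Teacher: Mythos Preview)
Your construction is essentially the paper's: define the functional on oriented $(n-1)$-simplices via the flux, extend linearly to polyhedral chains, use balance to bound by mass on cells and weak balance to bound by volume on boundaries of $n$-simplices, combine to get the flat-norm estimate with constant $\max\{s,b\}$, and complete by density of polyhedral chains. Your decomposition $P=P'+\bnd Q$ with cell-by-cell estimates is the explicit version of what the paper compresses into the Whitney-style formula $F(\alpha)=\max\bigl\{\sup\alpha(\sigma^{n-1})/\Fmass{\sigma^{n-1}},\sup\alpha(\bnd\sigma^{n})/\Fmass{\sigma^{n}}\bigr\}$.

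The one substantive difference is how the coefficients of a polyhedral chain are handled, and this difference matters for the point you flag as unresolved. You evaluate the flux only at the constant velocity $\virv=1$ and carry the coefficients $a_{j}$ by formal linearity in the chain argument. The paper instead encodes the coefficients in the \emph{velocity}: for $\chain=\sum_{j}a_{j}\sigma_{j}^{n-1}$ with the $\sigma_{j}^{n-1}$ pairwise disjoint, it sets $u(x)=a_{j}$ on $\sigma_{j}^{n-1}$, extends $u$ to a Lipschitz map $\tilde u$ on $\conf\{\body\}$ by Kirszbraun's theorem, and computes $\alpha(\chain)=\Phi_{\conf}^{i}\bigl(\cup_{j}\sigma_{j}^{n-1},\tilde u\bigr)$; balance makes this independent of the extension. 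This device does two things at once: it gives well-definedness of the linear extension without the orientation-and-refinement bookkeeping you describe, and---more importantly---it builds in from the start the identity between $\alpha$ on a polyhedral chain and the flux evaluated at a \emph{non-constant} velocity. The paper then upgrades this to the full representation $\cflux^{i}\bigl(\virv_{i}\conf_{\#}(T_{\surface})\bigr)=\Phi_{\conf}^{i}\bigl(\conf_{\#}(T_{\surface}),\virv_{i}\bigr)$ for every material surface and every $\virv$, by approximating $\virv_{i}\conf_{\#}(T_{\surface})$ by polyhedral chains converging in mass (not merely in the flat norm) and using the balance bound $|\alpha(\chain_{m})-\alpha(\chain_{k})|\le s\,\Fmass{\chain_{m}-\chain_{k}}$. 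That step is precisely the ``delicate point'' you leave open; in your setup, working only at $\virv=1$, you would still need a separate argument connecting $\cochain_{i}(\virv_{i}T_{\surface})$ to $\Phi_{\conf}^{i}(T_{\surface},\virv_{i})$ for general $\virv_{i}$, whereas the Kirszbraun encoding makes this connection available on polyhedral chains already.
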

\begin{proof}
Let $\simp^{n-1}$ be an oriented $\left(n-1\right)$-simplex in $\conf\left\{ \body\right\} $.
Since $\conf\left\{ \body\right\} $ is open there exists some $n$-simplex
$\simp^{n}$ in $\conf\left\{ \body\right\} $ such that $\simp^{n-1}\subset\bnd\simp^{n}$.
Since $\conf^{-1}\left\{ \simp^{n}\right\} $ is a set of finite perimeter
in $\body$ it follows that $\simp^{n-1}\in\conf\left(\bnd\Omega_{\body}\right)$.
In other words, every oriented $(n-1)$-simplex in $\conf\left\{ \body\right\} $
may be viewed as an element of $\conf\left(\bnd\Omega_{\body}\right)$.

In what follows, we use extensions of Lipschitz mappings as implied
by Kirszbraun's theorem.  First, define a real valued function
$\alpha$ of $(n-1)$-simplices. Let $u:\conf\left\{ \body\right\} \to\reals$
be a locally Lipschitz function in $\conf\left\{ \body\right\} $
such that $u(x)=1$ for $x\in\simp^{n-1}$, and we set 
\begin{equation}
\alpha\left(\simp^{n-1}\right)=\Phi_{\conf}^{i}\left(\simp^{n-1},u\right).\label{eq:Caucy_flux_on_simpices}
\end{equation}
The fact that the definition is independent of the choice of $u$
follows from condition (\ref{eq:Cauchy_flux_balanced}) and will be
demonstrated below where $\alpha$ is extended to polyhedral $(n-1)$-chains. 

Consider a polyhedral $(n-1)$-chain $\chain=\sum_{j=1}^{J}a_{j}\simp_{j}^{n-1}$
in $\conf\left\{ \body\right\} $ such that $\left\{ \simp_{j}^{n-1}\right\} _{j=1}^{J}$
are pairwise disjoint. Define the function $u:\cup_{j=1}^{J}\simp_{j}^{n-1}\to\reals$
by 
\begin{equation}
u(x)=a_{j}\quad\text{if }x\in\simp_{j}^{n-1}.
\end{equation}

We now apply Kirszbraun's theorem and obtain $\tilde{u}:\conf\left\{ \body\right\} \to\reals$,
a Lipschitz extension to $u$ defined on $\conf\left\{ \body\right\} $.
By the properties postulated for Cauchy fluxes 
\begin{eqnarray}
\Phi_{\conf}^{i}\left(\cup_{j=1}^{J}\simp_{j}^{n-1},\tilde{u}\right) & = & \sum_{j=1}^{J}\Phi_{\conf}^{i}\left(\simp_{j}^{n-1},\tilde{u}\right)=\sum_{j=1}^{J}a_{j}\alpha\left(\simp_{j}^{n-1}\right).
\end{eqnarray}
The function $\alpha$ is now extended to polyhedral $(n-1)$-chains
in $\conf\left\{ \body\right\} $ by linearity, \emph{i.e.}, 
\begin{equation}
\alpha\left(\chain\right)=\alpha\left(\sum_{j=1}^{J}a_{j}\simp_{j}^{n-1}\right)=\sum_{j=1}^{J}a_{j}\alpha\left(\simp_{j}^{n-1}\right).\label{eq:Cauchy_flux_on_polyhedral_chains}
\end{equation}
Thus, $\alpha$ is a linear functional of polyhedral $(n-1)$-chains.
The value of $\alpha(\chain)$ is independent of any particular extension
of $u$, for given $\tilde{u}',\tilde{u}$ any two Lipschitz extensions
of $u$,
\begin{multline}
\left|\Phi_{\conf}^{i}\left(\cup_{j=1}^{J}\simp_{j}^{n-1},\tilde{u}\right)-\Phi_{\conf}^{i}\left(\cup_{j=1}^{J}\simp_{j}^{n-1},\tilde{u}'\right)\right|\\
\begin{split} & =\left|\Phi_{\conf}^{i}\left(\cup_{j=1}^{J}\simp_{j}^{n-1},\tilde{u}-\tilde{u}'\right)\right|,\\
 & \leq s\norm{\tilde{u}-\tilde{u}'}{\Lip,\cup_{j=1}^{J}\simp_{j}^{n-1}}\Fmass{T_{\cup_{j=1}^{J}\simp_{j}^{n-1}}},\\
 & =0.
\end{split}
\end{multline}

From Equation (\ref{eq:Cauchy_flux_balanced}) it follows that 
\begin{equation}
\mass{\alpha\left(\simp^{n-1}\right)}\leq s\Fmass{\simp^{n-1}},\quad\text{for all}\quad\simp^{n-1}\in\conf\left\{ \body\right\} ,
\end{equation}
and by Equation (\ref{eq:Cauchy_flux_weakly_balanced}), 
\begin{equation}
\mass{\alpha\left(\bnd\simp^{n}\right)}\leq b\Fmass{\simp^{n}},\quad\text{for all}\quad\simp^{n}\in\conf\left\{ \body\right\} .
\end{equation}

The flat norm of a the functional $\alpha$ is defined by
\begin{equation}
\begin{split}F(\alpha)=\sup\left\{ \alpha(\chain)\mid\chain\,\,\text{is a polyhedarl }(n-1)\text{-chain},\right.\\
\left.\qquad\; F_{\compact}(\chain)\leq1,\:\compact\subset\conf\left\{ \body\right\} \right\} ,
\end{split}
\end{equation}

and we obtain  
\[
F(\alpha)=\max\left\{ \sup_{\simp^{n-1}\in\conf\left\{ \body\right\} }\frac{\alpha\left(\simp^{n-1}\right)}{\Fmass{\simp^{n-1}}},\sup_{\simp^{n}\in\conf\left\{ \body\right\} }\frac{\alpha\left(\bnd\simp^{n}\right)}{\Fmass{\simp^{n-1}}}\right\} \leq\max\left\{ s,b\right\} .
\]

We also recall, \cite[Section 4.1.23]{Federer1969}, that polyhedral
chains form a dense subspace of the space of flat chains, specifically,
for every $\chain\in F_{n-1,\compact}\left(\reals^{n}\right)$, a
compact subset $C\subset\conf(\body)$ whose interior contain $K$
and $\varepsilon>0$, there is and a polyhedral $(n-1)$-chain $\chain_{\varepsilon}$
supported in $C$ such that 
\begin{equation}
F_{C}\left(\chain-\chain_{\varepsilon}\right)\leq\varepsilon.
\end{equation}
Thus, for every flat $(n-1)$-chain $\chain$ we have a sequence
$\chain_{j}$ such that $\lim_{i\to\infty}^{F}\chain_{j}=\chain$.
The cochain $\alpha$ is uniquely extended a flat $(n-1)$-cochain
$\cflux$ such that for every $\chain=\lim_{j\to\infty}^{F}\chain_{j}$
\begin{equation}
\cflux(\chain)=\lim_{j\to\infty}\alpha(\chain_{j}).
\end{equation}
The foregoing part of the theorem is analogous to \cite[Section V.4]{Whitney1957}.

In order to complete the proof we need to show that for $\conf_{\#}\left(T_{\surface}\right)\in\conf\left(\bnd\Omega_{\body}\right)$
and $\virv\in\SS{\conf\left\{ \body\right\} }$ we obtain $\cflux\left(\virv\conf_{\#}\left(T_{\surface}\right)\right)=\Phi_{\conf}^{i}\left(\conf_{\#}\left(T_{\surface}\right),\virv\right)$.
By \cite[Section 4.1.17]{Federer1969} the class of flat chains of
finite mass is the $M$-closure of normal currents. The chain $\virv\conf_{\#}\left(T_{\surface}\right)$
is a flat $(n-1)$-chain of finite mass. Hence, the sequence of polyhedral
$(n-1)$-chains $\left\{ \chain_{j}\right\} _{j=1}^{\infty}$, converging
$\virv\conf_{\#}\left(T_{\surface}\right)$ in the flat norm, has
a convergent subsequence $\left\{ \chain_{j'}\right\} _{j'=1}^{\infty}$
such that $\left\{ \chain_{j'}\right\} $ converges to $\virv\conf_{\#}\left(T_{\surface}\right)$
in the flat norm and 
\begin{equation}
\Fmass{\virv\conf_{\#}\left(T_{\surface}\right)}=\lim_{j'}\Fmass{\chain_{j'}}.
\end{equation}
 By the definition of $\alpha$ and the balance principle, Equation
(\ref{eq:Cauchy_flux_balanced}) the sequence $\left\{ \alpha\left(\chain_{j'}\right)\right\} _{j'=1}^{\infty}$
is a Cauchy sequence in $\reals$ since $\mass{\alpha\left(\chain_{m}\right)-\alpha\left(\chain_{k}\right)}\leq s\Fmass{\chain_{m}-\chain_{k}}$.
Hence 
\begin{equation}
\lim_{j'\to\infty}\alpha\left(\chain_{j'}\right)=\Phi_{\conf}^{i}\left(\conf_{\#}\left(T_{\surface}\right),\virv\right).
\end{equation}
Since $\cflux$ is an extension of $\alpha$ it follows that $\cflux(\chain_{j}')=\alpha(\chain_{j}')$
and
\begin{equation}
\begin{split}\mass{\cflux\left(\virv\conf_{\#}\left(T_{\surface}\right)\right)-\Phi_{\conf}^{i}\left(\conf_{\#}\left(T_{\surface}\right),\virv\right)} & =\mass{\cflux\left(\virv\conf_{\#}\left(T_{\surface}\right)\right)-\lim_{j'\to\infty}\alpha(\chain_{j}')},\\
 & =\mass{\cflux\left(\virv\conf_{\#}\left(T_{\surface}\right)\right)-\lim_{j'\to\infty}\cflux\left(\chain_{j'}\right)},\\
 & =\mass{\cflux\left(\virv\conf_{\#}\left(T_{\surface}\right)\right)-\cflux\left(\lim_{j'\to\infty}\chain_{j'}\right)},\\
 & =\mass{\cflux\left(\virv\conf_{\#}\left(T_{\surface}\right)-\lim_{j'\to\infty}\chain_{j'}\right)},\\
 & \leq\max\left\{ s,b\right\} \lim_{j'\to\infty}\Fflat{\virv_{i}\conf_{\#}\left(T_{\surface}\right)-\chain_{j}'}=0,
\end{split}
\end{equation}
which completes the proof.
\end{proof}
The extension of each flat $(n-1)$-cochain from $\conf\left\{ \body\right\} \subset\reals^{n}$
to $\reals^{n}$ is done trivially by setting its representing flat
$(n-1)$-form to vanish outside $\conf\left\{ \body\right\} $. We
conclude that a Cauchy flux $\ssys_{\conf}$ induces a unique $n$-tuple
of flat $(n-1)$-cochains in $\reals^{n}$ such that 
\begin{equation}
\ssys_{\conf}\left(\conf_{\#}\left(T_{\surface}\right),\virv\right)=\sum_{i=1}^{n}\cflux^{i}\left(\virv_{i}\conf_{\#}\left(T_{\surface}\right)\right),\label{eq:fluxesAndCochains}
\end{equation}
for all $\virv\in\virvs$ and $\conf_{\#}\left(T_{\surface}\right)\in\conf\left(\bnd\Omega_{\body}\right)$.
The inverse implication is provided by

\begin{thm}
\label{thm:CochainsGiveFluxes}An $n$-tuple $\{\cflux^{i}\}$ of
flat $(n-1)$ cochains in $\reals^{n}$ induces by Equation \emph{(}\ref{eq:fluxesAndCochains}\emph{)}
a unique Cauchy flux $\ssys_{\conf}$.\end{thm}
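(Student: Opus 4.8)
The plan is to verify that the prescription (\ref{eq:fluxesAndCochains}) defines a genuine real-valued map on $\conf(\bnd\Omega_{\body})\times\virvs$, namely $\ssys_{\conf}(\conf_{\#}(T_{\surface}),\virv)=\sum_{i}\cflux^{i}(\virv_{i}\conf_{\#}(T_{\surface}))$ with $i$-th component $\ssys_{\conf}^{i}(\conf_{\#}(T_{\surface}),\Smap)=\cflux^{i}(\Smap\,\conf_{\#}(T_{\surface}))$, and that this map satisfies Additivity, Linearity, Balance and Weak~balance. Uniqueness needs no separate argument, since (\ref{eq:fluxesAndCochains}) prescribes the value of $\ssys_{\conf}$ on every pair. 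For well-definedness, observe that for $\conf_{\#}(T_{\surface})\in\conf(\bnd\Omega_{\body})$ the chain $\conf_{\#}(T_{\surface})=T_{\conf\{\surface\}}$ is a flat $(n-1)$-chain of finite mass in $\reals^{n}$ supported in the compact set $\conf\{\hat{\surface}\}$, while each component $\virv_{i}$ of $\virv\in\virvs$ is a sharp function; hence, by Section \ref{sec:Sharp_functions}, the product $\virv_{i}\conf_{\#}(T_{\surface})$ is a well-defined flat $(n-1)$-chain in $\reals^{n}$ supported in $\conf\{\hat{\surface}\}\subset\conf\{\body\}$, to which the flat $(n-1)$-cochain $\cflux^{i}$ may be applied. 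Since this chain is supported inside $\conf\{\body\}$, neither the particular Lipschitz extension of $\virv_{i}$ to $\reals^{n}$ nor the trivial extension of $\cflux^{i}$ outside $\conf\{\body\}$ affects the value, so the sum is unambiguous.

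Additivity and linearity are then formal. For disjoint, compatible material surfaces one has $T_{\surface\cup\surface'}=T_{\surface}+T_{\surface'}$ at the level of currents (the Borel sets $\hat{\surface},\hat{\surface}'$ being disjoint and carrying the common normal, with $\surface\cup\surface'$ itself a material surface by compatibility); since $\conf_{\#}$ and multiplication by a fixed sharp function are linear in the chain argument, $\virv_{i}\conf_{\#}(T_{\surface\cup\surface'})=\virv_{i}\conf_{\#}(T_{\surface})+\virv_{i}\conf_{\#}(T_{\surface'})$, and Additivity follows on applying $\cflux^{i}$ and summing over $i$. Linearity of $\ssys_{\conf}(\conf_{\#}(T_{\surface}),\cdot)$ holds because $\virv\mapsto\virv_{i}$, the assignment $\Smap\mapsto\alpha_{\Smap}$ (hence $\Smap\mapsto\Smap\,\conf_{\#}(T_{\surface})=\alpha_{\Smap}\irest\conf_{\#}(T_{\surface})$), and $\cflux^{i}$ are all linear. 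For Balance, combining the defining bound $|\cflux^{i}(\chain')|\le F(\cflux^{i})F_{\compact}(\chain')$ of a flat cochain with the flat product estimate (\ref{eq:Flat_sharp_norm}) (with $r=n-1$, applied to $\conf_{\#}(T_{\surface})$ supported in $\compact=\conf\{\hat{\surface}\}$) and with $F_{\compact}\le M$ from (\ref{eq:F<M}) gives
\[
\bigl|\cflux^{i}\bigl(\virv_{i}\conf_{\#}(T_{\surface})\bigr)\bigr|\le F(\cflux^{i})\Bigl(\sup_{\conf\{\hat{\surface}\}}\mass{\virv_{i}}+n\,\Lip_{\virv_{i},\conf\{\hat{\surface}\}}\Bigr)\Fmass{\conf_{\#}(T_{\surface})},
\]
which is (\ref{eq:Cauchy_flux_balanced}) with, say, $s=(n+1)\max_{i}F(\cflux^{i})$.

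Weak~balance is the step I expect to demand the most care, because the relevant chain is $\conf_{\#}(\bnd T_{\part})=\bnd\conf_{\#}(T_{\part})$ by (\ref{eq:bndF=00003DFbnd}), and no direct flat estimate of $\virv_{i}\,\bnd\conf_{\#}(T_{\part})$ in terms of $\Fmass{\conf_{\#}(T_{\part})}$ is available. The device is to move the boundary across the product by (\ref{eq:bnd_sharp_times_chain-1}): with $\chain=\conf_{\#}(T_{\part})=T_{\conf\{\part\}}$, a normal $n$-chain in $\reals^{n}$ supported in $\compact=\conf\{\part\}$, one has $\virv_{i}\,\bnd\chain=\bnd(\virv_{i}\chain)+\cbnd\alpha_{\virv_{i}}\irest\chain$, so that
\[
\cflux^{i}\bigl(\virv_{i}\,\bnd\chain\bigr)=\cbnd\cflux^{i}(\virv_{i}\chain)+\cflux^{i}\bigl(\cbnd\alpha_{\virv_{i}}\irest\chain\bigr),
\]
where $\cbnd\cflux^{i}$ is the coboundary flat $n$-cochain, again flat with $F(\cbnd\cflux^{i})\le F(\cflux^{i})$ since it is represented by $\wcbd\fform{\cflux^{i}}$ and $\wcbd\wcbd=0$. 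I would then bound the first summand by $F(\cbnd\cflux^{i})\bigl(\sup_{\compact}\mass{\virv_{i}}+(n+1)\Lip_{\virv_{i},\compact}\bigr)\Fmass{\chain}$, using (\ref{eq:Flat_sharp_norm}) with $r=n$ together with $F_{\compact}\le M$, and the second summand by $F(\cflux^{i})\,n\,\Lip_{\virv_{i},\compact}\,\Fmass{\chain}$, using the estimate (\ref{eq:inequality_a}) from the proof of Proposition \ref{prop:sharp_times_normal_and_flat} (with $\chain$ in the role of the $(r+1)$-chain ``$S$'' there, so $r+1=n$) together with $F_{\compact}\le M$. Summing over $i$ delivers (\ref{eq:Cauchy_flux_weakly_balanced}) for a suitable finite $b$, for instance $b=(2n+2)\max_{i}F(\cflux^{i})$. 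Uniqueness being immediate, this completes the proof.
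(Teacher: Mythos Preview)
Your proof is correct and follows essentially the same route as the paper's: define $\ssys_{\conf}$ by the formula, note that additivity and linearity are immediate from linearity of $\cflux^{i}$ and of the product, and for the two balance conditions use the boundary--product identity (\ref{eq:bnd_sharp_times_chain-1}) together with the mass/flat estimates of Section~\ref{sec:Sharp_functions}. The only cosmetic difference is that for Weak balance the paper stays on the chain side (bounding $F_{\compact}(\bnd(\virv_{i}\chain))\le F_{\compact}(\virv_{i}\chain)$ directly) whereas you pass to the coboundary $\cbnd\cflux^{i}$ and invoke $F(\cbnd\cflux^{i})\le F(\cflux^{i})$; these are dual formulations of the same inequality and yield the same conclusion.
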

\begin{proof}
For each $\virv\in\virvs$ and $\conf_{\#}T_{\surface}$, the Cauchy
flux $\ssys_{\conf}\left(\conf_{\#}\left(T_{\surface}\right),\virv\right)$
will be defined by Equation (\ref{eq:fluxesAndCochains}), and by
the components 
\begin{equation}
\ssys_{\conf}^{i}\left(\conf_{\#}\left(T_{\surface}\right),\virv_{i}\right)=\cflux^{i}\left(\virv_{i}\conf_{\#}\left(T_{\surface}\right)\right).
\end{equation}
The additivity (\ref{eq:Cauchy_flux_additivity}) and linearity (\ref{eq:Cauchy_flux_linearity})
properties clearly hold since $\cflux^{i}$ is a linear function of
flat $(n-1)$-chains. For the Balance (\ref{eq:Cauchy_flux_balanced})
and weak balance (\ref{eq:Cauchy_flux_weakly_balanced}) properties,
recall that since $\cflux^{i}$ is a flat $(n-1)$-cochain, there
exists $C>0$ such that for every flat $(n-1)$-chain $\chain$ with
support in $\compact$, we may write $\mass{\cflux^{i}(\chain)}\leq CF_{\compact}\left(\chain\right)$.
For the balance property
\begin{equation}
\begin{split}\mass{\ssys_{\conf}^{i}\left(\conf_{\#}\left(T_{\surface}\right),\virv_{i}\right)} & =\mass{\cflux^{i}\left(\virv_{i}\conf_{\#}\left(T_{\surface}\right)\right)},\\
 & \leq CF_{\conf\left(\surface\right)}\left(\virv_{i}\conf_{\#}\left(T_{\surface}\right)\right),\\
 & \leq C\Fmass{\virv_{i}\conf_{\#}\left(T_{\surface}\right)},\\
 & \leq C\norm{\virv_{i}}{\Lip,\hat{\surface}}\Fmass{\conf_{\#}\left(T_{\surface}\right)}.
\end{split}
\end{equation}
For the weak balance
\begin{equation}
\begin{split}\mass{\ssys_{\conf}^{i}\left(\conf_{\#}\left(\bnd T_{\part}\right),\virv_{i}\right)} & =\mass{\cflux^{i}\left(\virv_{i}\conf_{\#}\left(\bnd T_{\part}\right)\right)},\\
 & \leq CF_{\conf\left(\part\right)}\left(\virv_{i}\conf_{\#}\left(\bnd T_{\part}\right)\right),\\
 & =CF_{\conf\left(\part\right)}\left(\bnd\left(\virv_{i}\conf_{\#}\left(T_{\part}\right)\right)+\cbnd\virv\irest T_{\part}\right),\\
 & \leq C\left[F_{\conf\left(\part\right)}\left(\bnd\left(\virv_{i}\conf_{\#}\left(T_{\part}\right)\right)\right)+F_{\conf\left(\part\right)}\left(\cbnd\virv\irest\conf_{\#}\left(T_{\part}\right)\right)\right],\\
 & \leq C\left[F_{\conf\left(\part\right)}\left(\virv_{i}\conf_{\#}\left(T_{\part}\right)\right)+F_{\conf\left(\part\right)}\left(\cbnd\virv\irest\conf_{\#}\left(T_{\part}\right)\right)\right],\\
 & \leq C\left[\Fmass{\virv_{i}\conf_{\#}\left(T_{\part}\right)}+\Fmass{\cbnd\virv\irest\conf_{\#}\left(T_{\part}\right)}\right],\\
 & \leq C\left[\sup_{x\in\conf(\part)}\mass{\virv_{i}(x)}\Fmass{\conf_{\#}\left(T_{\part}\right)}+n\Lip_{\virv,\conf(\part)}\Fmass{\conf_{\#}\left(T_{\part}\right)}\right],\\
 & \leq C(n+1)\norm{\virv_{i}}{\Lip,\conf(\part)}\Fmass{\conf_{\#}\left(T_{\part}\right)}.
\end{split}
\end{equation}
 
\end{proof}

Thus, Theorems \ref{thm:FluxesGiveCochains} and \ref{thm:CochainsGiveFluxes}
restate the point of view presented in \cite{Rodnay2003} that the
balance and weak-balance assumptions of stress theory may be replaced
by the requirement that the system of forces is given in terms of
an $n$-tuple of flat $(n-1)$-cochains.

\section{Generalized bodies and Generalized surfaces\label{sec:Generalized-bodies}}

The representation of a Cauchy flux by an $n$-tuple of flat $\left(n-1\right)$-cochains
enables the generalization of the class of admissible bodies and the
introduction of a larger class of material surfaces. By a generalized
body we will mean a subset $\gpart$ of the open set $\body$ such
that the induced current $T_{\gpart}$ is a flat $n$-chain in $\body$.
Note that the general structure constructed thus far holds for generalized
bodies. For any configuration $\conf\in\Limb(\body,\reals^{n})$,
the current $\conf_{\#}\left(T_{\gpart}\right)$ is a flat $n$-chain
in $\reals^{n}$, and the operations $\cflux\left(\virv\conf_{\#}\left(\bnd T_{\gpart}\right)\right)$
and $\cbnd\cflux\left(\virv\conf_{\#}\left(T_{\gpart}\right)\right)$
are well defined.
\begin{defn}
A \textit{generalized body} is a set $\gpart\subset\body$ such that
the induced current $T_{\gpart}=\lusb^{n}\llcorner\gpart$ given by
\begin{equation}
T_{\gpart}(\omega)=\int_{\gpart}\omega d\lusb^{n},
\end{equation}
is a flat $n$-chain in $\body$. 
\end{defn}
By \cite[Section 4.1.24]{Federer1969} the current $T_{\gpart}$ is
a rectifiable $n$-current or an \textit{integral flat $n$-chain}
in $\body$. Moreover, we have 

\begin{equation}
\Fflat{T_{\gpart}}=\Fmass{T_{\gpart}}=\lusb^{n}\left(\gpart\right).
\end{equation}
The above definition of generalized bodies implies that a generalized
body may be characterized as an $n$-rectifiable set in $\body$ \cite[Section 3.2.14]{Federer1969},
or alternatively, as an $\lusb^{n}$-summable set in $\body$. The
class of generalized admissible bodies is 

\begin{equation}
\gunivb=\left\{ T_{\gpart}\mid\gpart\subset\body,T_{\gpart}\in F_{n}(\body)\right\} .
\end{equation}
As mentioned in Section \ref{sec:Sets_of_finite_perimeter}, $\gunivb$
will have the structure of a Boolean algebra if $\body$ was postulated
to be a bounded set. Since $N_{n}(\body)\subset F_{n}(\body)$, it
is clear that $\Omega_{\body}\subset\gunivb$. Given $T_{\gpart},T_{\gpart'}\in\gunivb$
clearly $T_{\gpart\cup\gpart'}$ is an element of $\gunivb$. Contrary
to the previous definition of bodies, a generalized body needs not
be a set of finite perimeter. Although $\gpart$ is a bounded set,
its measure theoretic boundary, $\mtb(\gpart)$, may be unbounded
in the sense that $H^{n-1}(\mtb(\gpart))=\infty$. Generally speaking,
the boundary of a rectifiable set may not be a rectifiable set. A
classical example of such a generalized body in $\reals^{2}$ is the
Koch snowflake. In \cite{Silhavy2006}, such a body is referred to
as a \textit{rough body}.
\begin{rem}
It is noted that although every generalized body $\gpart$ induces
an integral flat $n$-chain, not every integral flat represents a
generalized body. However, it seems plausible that a \textit{flat
$n$-class,} introduced in \cite{Ziemer1962}, is in one to one correspondence
with the class of generalized bodies. This issue will not be considered
in this work.

\end{rem}
Considering a generalized surface, we first note that for a generalized
body $T_{\gpart}$, $\bnd T_{\gpart}$ is a flat $(n-1)$-chain in
$\body$. In addition, the following argument (\cite[Lemma 2.1]{Fleming1966})
indicates that the restrictions of flat chains to general Borel subsets
are not necessarily flat chains. Let $\halfs_{\lambda,s}$ denote
the closed half space defined by the linear functional $\lambda:\reals^{n}\to\reals$
such that 
\begin{equation}
\halfs_{\lambda,s}=\left\{ x\in\reals^{n}\mid\lambda(x)\geq s\right\} .\label{eq:half_space}
\end{equation}
Let $T_{\gpart}\in F_{\compact,n}(\body)$ be a generalized body in
$\body$ supported in a compact subset $\compact$ of $\body$, so
that $\bnd T_{\gpart}$ is a flat $\left(n-1\right)$-chain, and consider
the chain $\bnd T_{\gpart}\rest H_{\lambda,s}$. One has,
\begin{equation}
\begin{split}F_{\compact}\left(\bnd T_{\gpart}\rest H_{\lambda,s}\right) & =F_{\compact}\left(\bnd T_{\gpart}\rest H_{\lambda,s}+\bnd\left(T_{\gpart}\rest H_{\lambda,s}\right)-\bnd\left(T_{\gpart}\rest H_{\lambda,s}\right)\right),\\
 & \leq F_{\compact}\left(\bnd T_{\gpart}\rest H_{\lambda,s}-\bnd\left(T_{\gpart}\rest H_{\lambda,s}\right)\right)+F_{\compact}\left(\bnd\left(T_{\gpart}\rest H_{\lambda,s}\right)\right),\\
 & \leq F_{\compact}\left(\bnd T_{\gpart}\rest H_{\lambda,s}-\bnd\left(T_{\gpart}\rest H_{\lambda,s}\right)\right)+F_{\compact}\left(T_{\gpart}\rest H_{\lambda,s}\right),\\
 & \leq\Fmass{\bnd T_{\gpart}\rest H_{\lambda,s}-\bnd\left(T_{\gpart}\rest H_{\lambda,s}\right)}+\Fmass{T_{\gpart}\rest H_{\lambda,s}}.
\end{split}
\end{equation}
 Since $T_{\gpart}$ is a chain of finite mass, $\Fmass{T_{\gpart}\rest H_{\lambda,s}}<\infty$.
In addition 
\begin{equation}
\int_{-\infty}^{\infty}\Fmass{\bnd T_{\gpart}\rest H_{\lambda,s}-\bnd\left(T_{\gpart}\rest H_{\lambda,s}\right)}ds=\Fmass{T_{\gpart}},
\end{equation}
and so we can show that $\Fmass{\bnd T_{\gpart}\rest H_{\lambda,s}-\bnd\left(T_{\gpart}\rest H_{\lambda,s}\right)}<\infty$
only for $\lusb^{1}$-almost every $s\in\reals$.  

In order to define a generalized material surface we follow \cite{Silhavy2006}
where the various properties of flux over fractal boundaries are investigated.
\begin{defn}
For a generalized body $\gpart$, the subset $\gsurface\subset\mtb(\gpart)$
is said to be a \textit{trace} if there exists a set of finite perimeter
$M$ such that $\gsurface=\mtb(\gpart)\cap M$ and $H^{n-1}(\mtb(\gpart)\cap\mtb\left(M\right))=0$.
Each trace $\gsurface$ is associated with a unique flat $(n-1)$-chain
$T_{\gsurface}$ given by 
\begin{equation}
T_{\gsurface}=\bnd T_{\gpart\cap M}-\bnd T_{M}\rest\gpart.\label{eq:trace_current}
\end{equation}

\end{defn}
For each $\omega\in\D^{n-1}(\body)$ we have 
\begin{equation}
T_{\gsurface}(\omega)=\int_{\gpart\cap M}d\omega(e_{1}\wedge\dots\wedge e_{n})d\lusb^{n}-\int_{\mtb\left(M\right)\cap\gpart}\omega(\vec{T}_{\bnd M})dH^{n-1},
\end{equation}
where $\vec{T}_{\bnd M}$ is defined as in Equation (\ref{eq:T_S_def_vec_def}).
The set $M$, of finite perimeter, is referred to as the generator
of the trace $\gsurface$ and it is shown in \cite{Silhavy2006} that
$\gsurface$ depends on $M$ only through the intersection of $\bnd T_{\gpart}$
with $M$. 

The collection of generalized material surfaces is defined as 
\begin{equation}
\gunivs=\left\{ T_{\gsurface}\mid\gsurface\text{ is a trace in \ensuremath{\body}}\right\} .
\end{equation}
We note that by Proposition \ref{prop:sharp_times_normal_and_flat},
for all $T_{\gsurface}\in\gunivs$ and $\virv\in\virvs$, the multiplication
$\virv\conf_{\#}\left(T_{\gsurface}\right)$ is an $n$-tuple of flat
$(n-1)$-chains. Thus, by Theorem \ref{thm:Cauchy_flux_flat_cochain}
the Cauchy flux is naturally extended to the Cartesian product $\virvs\times\conf\left(\gunivs\right)$.

\section{virtual strains and the principle of virtual work\label{sec:virtual-strains-and_virtual_work}}

\global\long\def\strech{\varepsilon}

\global\long\def\ivirv{\chi}

\global\long\def\stress{\tau}

For $T_{\gpart}\in\gunivs$ and $\virv\in\virvs$, $\bnd\left(\virv\conf_{\#}\left(T_{\gpart}\right)\right)$
is an $n$-tuple of flat $(n-1)$-chains in $\body$. Thus, $\cflux\left(\bnd\left(\virv\conf_{\#}\left(T_{\gpart}\right)\right)\right)$
is a well defined action of an $n$-tuple of flat $(n-1)$-cochains
on an $n$-tuple of flat $(n-1)$ chains. Applying Equation \ref{eq:bnd_sharp_times_chain-1}
for each component we obtain
\begin{equation}
\sum_{i=1}^{n}\cflux_{i}\left(\bnd\left(\virv_{i}\conf_{\#}\left(T_{\gpart}\right)\right)\right)=\sum_{i=1}^{n}\cflux_{i}\left(\virv_{i}\conf_{\#}\left(\bnd T_{\gpart}\right)\right)-\sum_{i=1}^{n}\cflux_{i}\left(\cbnd\alpha_{\virv_{i}}\irest\conf_{\#}\left(T_{\gpart}\right)\right).
\end{equation}
 Here $\alpha_{\virv_{i}}$ is the flat $0$-chain defined in Section
\ref{sec:Sharp_functions}.

The terms on the right-hand side of the equation above may be interpreted
as follows. The term $\sum_{i=1}^{n}\cflux_{i}\left(\virv_{i}\conf_{\#}\left(\bnd T_{\gpart}\right)\right)$
is interpreted as the virtual power performed by the surface forces
for the virtual velocity $\virv$ on the boundary of the body $\gpart$.
Next, for $-\cflux\left(\bnd\left(\virv\conf_{\#}\left(T_{\gpart}\right)\right)\right)=-\cbnd\cflux\left(\virv\conf_{\#}\left(T_{\gpart}\right)\right)$,
the $n$-tuple of flat $n$-cochains $-\cbnd\cflux$ is viewed as
the body force. Thus the term $-\cbnd\cflux\left(\virv\conf_{\#}\left(T_{\gpart}\right)\right)$
is interpreted as the virtual power performed by the body forces along
the virtual velocity $\virv$ on the body $\gpart$. Finally, $\sum_{i=1}^{n}\cflux_{i}\left(\cbnd\alpha_{\virv_{i}}\irest\conf_{\#}\left(T_{\gpart}\right)\right)$
is interpreted as the virtual power performed by the Cauchy flux along
the derivative of the virtual velocity $\virv$ on the body $\gpart$.

An internal virtual velocity is viewed as an element upon which the
Cauchy flux will act. Thus, a generalized \textit{internal virtual
velocity} is defined as an $n$-tuple of flat $\left(n-1\right)$-chains
in $\conf\left\{ \body\right\} $. A typical internal virtual velocity
will be denoted by $\ivirv$ and is viewed as a velocity gradient
or a linear strain-like entity. Clearly, not every internal virtual
velocity is derived from an external virtual velocity. Motivated by
the above physical interpretation and the classical formulation of
the principle of virtual work, we introduce the \textit{kinematic
interpolation map}  
\begin{equation}
\strech:\conf(\gunivb)\times\virvs\to\left[F_{n-1}\left(\conf\left(\body\right)\right)\right]^{n}
\end{equation}
such that each component is given by 
\begin{equation}
\left(\strech\left(\conf_{\#}\left(T_{\gpart}\right),\virv\right)\right)_{i}=\virv_{i}\bnd\conf_{\#}\left(T_{\gpart}\right)-\bnd\left(\virv_{i}\conf_{\#}\left(T_{\gpart}\right)\right).
\end{equation}
Note that the map $\strech$ is disjointly additive in the first
argument and linear in the second. An internal virtual velocity $\ivirv$
is said to be \textit{compatible} if there are $\gpart\in\gunivb$
and $\virv\in\virvs$ such that 
\begin{equation}
\ivirv=\strech\left(\conf_{\#}\left(T_{\gpart}\right),\virv\right).
\end{equation}

Given a compatible virtual internal velocity $\ivirv=\strech\left(\conf_{\#}\left(T_{\gpart}\right),\virv\right)$
we may write,
\begin{eqnarray}
\cflux\left(\strech\left(\conf_{\#}\left(T_{\gpart}\right),\virv\right)\right) & = & \sum_{i=1}^{n}\cflux_{i}\left(\virv_{i}\bnd\conf_{\#}\left(T_{\gpart}\right)-\bnd\left(\virv_{i}\conf_{\#}\left(T_{\gpart}\right)\right)\right),\nonumber \\
 & = & \sum_{i=1}^{n}\cflux_{i}\left(\virv_{i}\conf_{\#}\left(\bnd T_{\gpart}\right)\right)-\sum_{i=1}^{n}\cbnd\cflux_{i}\left(\virv_{i}\conf_{\#}\left(T_{\gpart}\right)\right),\nonumber \\
 & = & \sum_{i=1}^{n}\cbnd\alpha_{\virv_{i}}\wedge\cflux_{i}\left(\conf_{\#}\left(T_{\gpart}\right)\right),
\end{eqnarray}
and obtain 
\begin{equation}
\cflux\left(\virv\conf_{\#}\left(\bnd T_{\gpart}\right)\right)-\cbnd\cflux\left(\virv\conf_{\#}\left(T_{\gpart}\right)\right)=\cflux\left(\strech\left(\conf_{\#}\left(T_{\gpart}\right),\virv\right)\right),\label{eq:virtual_power}
\end{equation}
for all $T_{\gpart}\in\gunivb$ and $\virv\in\virvs$. We view the
last equation as a generalization of the principle of virtual power.

\section{Stress\label{sec:Stress}}

Applying the representation theorem of flat cochains, a Cauchy flux
is represented by an $n$-tuple of flat $(n-1)$-forms in $\conf\left\{ \body\right\} $.
Let $\cflux_{i}$ denote the flat $(n-1)$-cochain associated with
the $i$-th component of the Cauchy flux. Then, $\fform{\cflux_{i}}$
will be used to denote its representing flat $(n-1)$-form. The $n$-tuple
of flat $(n-1)$-forms in $\conf\left\{ \body\right\} $ representing
the Cauchy flux will be denoted by $\fform{\cflux}$ and will be referred
to as the \textit{Cauchy stress}.

Using the representation theorem for flat forms we obtain an integral
representation of the principle of virtual power given in Equation
(\ref{eq:virtual_power}). The virtual power performed by surface
forces is represented by
\begin{multline}
\sum_{i=1}^{n}\cflux_{i}\left(\virv_{i}\conf_{\#}\left(\bnd T_{\gpart}\right)\right)\\
\begin{split} & =\sum_{i=1}^{n}\left(\conf^{\#}\left(\cbnd\left(\alpha_{\virv_{i}}\wedge\cflux_{i}\right)\right)\right)\left(T_{\gpart}\right),\\
 & =\sum_{i=1}^{n}\int_{\gpart}\wcbd\left(\virv_{i}D_{\cflux_{i}}\left(\conf\left(x\right)\right)\right)\left(D\conf(x)(e_{1})\wedge\dots\wedge D\conf(x)(e_{n})\right)d\lusb_{x}^{n},\\
 & =\sum_{i=1}^{n}\int_{\gpart}\wcbd\left(\virv_{i}D_{\cflux_{i}}\left(\conf\left(x\right)\right)\right)\left(e_{1}\wedge\dots\wedge e_{n}\right)J_{\conf}(x)d\lusb_{x}^{n}.
\end{split}
\label{eq:material_integral_surface_power}
\end{multline}
Equations (\ref{eq:pullback_flat_cochain}) and (\ref{eq:pullback_flat_form})
were used in the first and second lines. The virtual power performed
by body forces is represented by
\begin{multline}
-\sum_{i=1}^{n}\cbnd\cflux_{i}\left(\virv_{i}\conf_{\#}\left(T_{\gpart}\right)\right)\\
\begin{split} & =-\sum_{i=1}^{n}\conf^{\#}\left(\alpha_{\virv_{i}}\wedge\cbnd\cflux_{i}\right)\left(T_{\gpart}\right),\\
 & =-\sum_{i=1}^{n}\int_{\gpart}\left(\virv_{i}\wcbd D_{\cflux_{i}}\left(\conf\left(x\right)\right)\right)\left(D\conf(x)(e_{1})\wedge\dots\wedge D\conf(x)(e_{n})\right)d\lusb_{x}^{n},\\
 & =-\sum_{i=1}^{n}\int_{\gpart}\left(\virv_{i}\wcbd D_{\cflux_{i}}\left(\conf\left(x\right)\right)\right)\left(e_{1}\wedge\dots\wedge e_{n}\right)J_{\conf}(x)d\lusb_{x}^{n}.
\end{split}
\label{eq:material_integrable_body_power}
\end{multline}

The virtual power performed by internal forces is represented by
\begin{multline}
\sum_{i=1}^{n}\cbnd\virv_{i}\wedge\cflux_{i}\left(\conf_{\#}\left(T_{\gpart}\right)\right)\\
\begin{split} & =\sum_{i=1}^{n}\conf^{\#}\left(\cbnd\alpha_{\virv_{i}}\wedge\cflux_{i}\right)\left(T_{\gpart}\right),\\
 & =\sum_{i=1}^{n}\int_{\gpart}\left(\wcbd\virv_{i}\wedge D_{\cflux_{i}}\left(\conf\left(x\right)\right)\right)\left(D\conf(x)(e_{1})\wedge\dots\wedge D\conf(x)(e_{n})\right)d\lusb_{x}^{n},\\
 & =\sum_{i=1}^{n}\int_{\gpart}\left(\wcbd\virv_{i}\wedge D_{\cflux_{i}}\left(\conf\left(x\right)\right)\right)\left(e_{1}\wedge\dots\wedge e_{n}\right)J_{\conf}(x)d\lusb_{x}^{n}.
\end{split}
\label{eq:material_integrable_stress_power}
\end{multline}

For $\conf:\body\to\reals^{n}$, a Lipschitz map, $\conf^{\#}\cflux$
is an $n$-tuple of flat $(n-1)$-cochains in $\body$. Each cochain
$\conf^{\#}\cflux_{i}$ is represented by a flat $(n-1)$-form $\fform{\conf^{\#}\cflux_{i}}=\conf^{\#}\fform{\cflux_{i}}$.
The associated $n$-tuple of flat $(n-1)$-forms, $\conf^{\#}\fform{\cflux}$
is identified as the \textit{Piola-Kirchhoff stress}
\begin{equation}
\left(\conf^{\#}\fform{\cflux}(x)\right)_{i}=J_{\conf}(x)D_{\cflux_{i}}\left(\conf(x)\right).
\end{equation}

\end{document}